\date{}
\newcommand{\eqdef}{\stackrel{\triangle}{=}}
\newcommand{\be}{\begin{equation}}
\newcommand{\ba}{\begin{array}{lcl}}
\newcommand{\bs}{\begin{small}}
\newcommand{\es}{\end{small}}
\newcommand{\ea}{\end{array}}
\newcommand{\ee}{\end{equation}}
\newcommand{\bsp}{\mbox{ }}
\newcommand{\noi}{\noindent}
\newtheorem{lemma}{Lemma}
\newtheorem{theorem}{Theorem}
\newcommand{\revbegin}{\begin{color}{black}}
\newcommand{\revend}{\end{color}}
\title{
\vspace{-3cm}
\hfill {\em{\small Accepted to IEEE Trans. on Signal Processing}}\\
Feedback Allocation For OFDMA Systems With Slow Frequency-domain Scheduling\thanks{*Corresponding author. A preliminary version with a subset of the results was presented at the {\em Allerton Conference on Communication, Control and Computing}, Monticello, IL, Sep. 2009.}
}
{\small
{\author{Harish Ganapathy$^*$,~Siddhartha Banerjee,~Nedialko B. Dimitrov~and Constantine Caramanis%
\thanks{H. Ganapathy, S. Banerjee and C. Caramanis are with the Department of Electrical and Computer Engineering, The University of Texas, Austin, TX 78712. N. B. Dimitrov is with the Operations Research Department, Naval Postgraduate School, Monterey, CA 93943.}\\
\texttt{E-mail: \{harishg,siddhartha\}@utexas.edu,ned@nps.edu,\\ caramanis@mail.utexas.edu}}
}
}
\begin{document}
\maketitle
\thispagestyle{empty}
\begin{abstract}
We study the problem of allocating limited feedback resources across multiple users in an orthogonal-frequency-division-multiple-access downlink system with slow frequency-domain scheduling. Many flavors of slow frequency-domain scheduling (e.g., persistent scheduling, semi-persistent scheduling), that adapt user-sub-band assignments on a slower time-scale, are being considered in standards such as 3GPP Long-Term Evolution. In this paper, we develop a feedback allocation algorithm that operates in conjunction with any arbitrary slow frequency-domain scheduler with the goal of improving the throughput of the system. Given a user-sub-band assignment chosen by the scheduler, the feedback allocation algorithm involves solving a weighted sum-rate maximization at each (slow) scheduling instant. We first develop an optimal dynamic-programming-based algorithm to solve the feedback allocation problem with pseudo-polynomial complexity in the number of users and in the total feedback bit budget. We then propose two approximation algorithms with complexity further reduced, for scenarios where the problem exhibits additional structure.
\end{abstract}

\begin{keywords}
Limited feedback, multi-user feedback allocation, throughput-optimal, uplink feedback, Random Vector Quantization, sub-modular functions, convex relaxations
\end{keywords}

\section{Introduction}\label{intro}
Orthogonal-frequency-division-multiple-access (OFDMA) is the multiple-access technology of choice for many current and future wireless standards such as 3GPP Long-Term Evolution (LTE), IEEE 802.16e (WiMAX) and Long-Term Evolution Advanced (LTE-A). With the singular goal of achieving higher throughputs in order to keep pace with the ever-growing suite of data-hungry applications, OFDMA systems typically operate in conjunction with a fast frequency-domain scheduler that allows for aggressive adaptation to the fading conditions of the channel. Here, user-sub-band assignments are typically made once every 1ms, 2ms or 5ms depending upon the standard under consideration. In the quest for higher data rates, the overhead incurred in \textit{enabling} such fast frequency-domain scheduling is often ignored.

Primarily, there are two types of overhead that facilitate user scheduling in an OFDMA downlink system. These are: the overhead incurred in \textit{(i)} communicating user-sub-band assignments and in \textit{(ii)} collecting channel state information (CSI) from all users commonly referred to as the process of feedback. To address the first issue, recently, there has been an increasing interest in ``slow'' frequency-domain scheduling \cite{intel,nokia,jin, ltebook1,win} instead of its faster counterpart for applications where the overhead demands of the latter do not justify its use. For example, LTE adopts \textit{(semi-)persistent} scheduling for voice-over-IP applications that typically do not have high throughput demands  \cite{intel,nokia,jin, ltebook1}. Here, user-sub-band assignments are decided on a slower time-scale while link adaptation (on the fast time-scale) specifically in the semi-persistent approach, is achieved through Hybrid Automatic Repeat Request (HARQ) re-transmissions. Li et al. \cite{win} show that slow OFDMA scheduling can achieve throughputs close to the ideal case in many real-world scenarios.

Moving on to the implications of \textit{(ii)}, we borrow an example of a typical LTE setting recently provided in Ouyang et al. \cite{ying}:  In LTE, the smallest unit of bandwidth that can be assigned to a user for data transmission is called a resource block, which is essentially a group of OFDM sub-carriers. If we consider a $10$MHz LTE system with $L=50$ resource blocks shared by $K = 50$ users equipped with standard $4$-bit codebooks (modulation/coding tables) at the mobiles, we have a total feedback bandwidth of $4KL = 4\times 50\times 50 = 10$kb per sub-frame \cite{3gpp}. Given a typical uplink data rate of $48$kb per sub-frame, this consumes $20\%$ of the uplink capacity, clearly making feedback bandwidth consumption an important bottleneck. This observation, amongst others, has motivated the development of limited feedback techniques \cite{love_tut,heath,jafarkhani1,jafarkhani2,mehta,agarwal,poor,chen,gesbert}. In general, adapting the size of the codebook (e.g., CSI table at the mobiles) \cite{heath,jafarkhani1,jafarkhani2} and sub-carrier grouping \cite{mehta,agarwal,poor,chen,gesbert}, subject to a constraint on the total available feedback bandwidth, are two of the most popular multi-user limited feedback approaches in the literature. In the former, the size of the codebook on each OFDMA sub-band, and potentially the codebooks themselves, are periodically chosen based on the ``state'' of the network. In the latter, feedback reduction is achieved through a grouping technique where one CSI report is generated for a group of OFDMA sub-bands.

In this paper, we propose a feedback allocation policy that operates in conjunction with a slow frequency-domain scheduler assumed given. In particular, given a scheduling assignment on a slower time-scale, i.e., once every $T$ time slots, the feedback allocation policy decides user codebook sizes again on the same time-scale. Thus, in the context of past literature, we focus on the former limited feedback approach of choosing dynamic codebook sizes as a function of the network state (e.g., channels, queues, etc.), a process that we call \textit{feedback allocation} henceforth\footnote{One can in general consider a more comprehensive feedback allocation policy that includes both codebook-size adaptation \textit{and} sub-carrier grouping. However, such policies are beyond the scope of this paper and a subject for future study.}, to address the second type of overhead. The difference between our approach and past work on dynamic codebook selection is that our algorithm adapts to queue sizes and hence user applications, in addition to the channel state thereby generalizing earlier methods.

The main contributions of this paper are the following:
\begin{enumerate}
\item[1)] We propose a throughput-optimal feedback allocation policy that overlays a given slow scheduler. The proposed policy takes the form of a weighted sum-rate maximization problem that needs to be solved once every $T$ time slots. Throughput-optimality is with respect to the space of all possible feedback allocation policies while \emph{fixing} the particular data scheduler of interest.
\end{enumerate}
Efficient algorithmic implementability of these policies is a critical design requirement, and this is the focus of our remaining contributions. Our focus is aligned with several papers over the last decade, which study the algorithmic aspects of queue-based scheduling for specific network structures and resource allocation problems (see, e.g., \cite{shroff,srikant,huang,tan1,tan2} and references therein). \revbegin Needless to say, the difficulty in solving the weighted sum-rate maximization problem in each slot is intricately tied to the resource being optimized. Recently, significant strides were made by Tan et al. \cite{tan1,tan2} in solving the joint queue-based scheduling and power control problem that has attracted much interest over the years (see \cite{huang} and references therein). Here, the possible transmission rates in each slot come from a continuous region induced by all possible power allocations. The authors \cite{tan1,tan2} solve non-convex power-control problems (e.g., weighted sum-rate) accurately and efficiently by using solutions to related convex problems (e.g., max-min rate) in an intelligent manner. Optimality is established under many channel settings. While the lack of convexity is due to interfering users in \cite{tan1,tan2}, in the model by Huang et al. \cite{huang}, self-noise due to channel estimation error forms the cause. The authors \cite{huang} nevertheless propose both optimal and sub-optimal approaches with varying degrees of complexity. In contrast to power allocation, in our case, the region of possible rates in each slot is discrete and is induced by all possible splits of the total feedback budget. This allows us to leverage many powerful  tools from the area of combinatorial optimization. With the exception of the work of Ouyang and Ying \cite{ying}, the problem of feedback allocations has not been considered in the past, to the best of our knowledge.
\revend

\begin{enumerate}
\item[2)] We develop a dynamic programming algorithm that solves the weighted sum-rate feedback allocation problem with pseudo-polynomial complexity in the number of users and the total feedback bit budget. This approach is exact and requires no assumptions on the structure of the weighted sum-rate function.\label{point1}

\item[3)] We show that in many practical wireless systems, the weighted sum-rate is non-decreasing and sub-modular in the feedback budget. Using this observation, we leverage sub-modular optimization results from combinatorial optimization (e.g. \cite{goundan,calinescu,vondrak}) and propose a reduced-complexity algorithm with an approximation guarantee of $(1-\frac{1}{e})$.

\item[4)] Multiple-input-single-output (MISO) beamforming is being considered as a potential transmission mode in the Long Term Evolution standard \cite{ltebook1}. For such systems, we show that when the popular Random Vector Quantization codebook \cite{raghavan,love_RVQ1,love_RVQ2} is used, we are able to reduce the complexity even further and provide an approximation guarantee of $\frac{1}{2}$.\label{point2}
\end{enumerate}

The rest of this paper is organized as follows. In Section \ref{sec:sysmodel}, we introduce the system model for feedback allocation and slow data scheduling. In Section \ref{sec:tputopt}, we discuss the notion of throughput-optimality in queueing networks and introduce a throughput-optimal joint feedback allocation and slow data scheduling policy. In Section \ref{sec:optsol}, we solve the optimal online feedback optimization problem for both objectives while in Section \ref{sec:redcomp}, we investigate methods of reducing the complexity of the optimal online optimization problem by exploiting more structure in the objective function. Simulation results are presented in Section~\ref{sec:simulations}. Concluding remarks are made in Section~\ref{sec:conclusion}.

{\em Notation}: $x_{ij}$ denotes element $(i,j)$ of matrix ${\bf X}$ while $x_{i}$ denotes element $i$ of vector ${\bf x}$. Given matrices ${\bf X},{\bf Y}\in{\mathbb R}^{p\times q}$, $X\leq Y$ means $x_{ij}\leq y_{ij},\forall i=1,\ldots,p,\bsp j=1,\ldots,q$. ${\mathbb R}_{+}$, ${\mathbb N}_{0}$ and ${\mathbb N}$ represent the non-negative real numbers, non-negative integers and positive integers respectively.

\section{System model}
\label{sec:sysmodel}
We consider the downlink of a frequency-division-duplex OFDMA system with $L$ sub-carriers/sub-bands and $K$ users that operates in slotted-time. The network model is described below:

\textbf{Channel State}: The true supportable rate for user $i$ on sub-band $j$ at time $t$ is given by $\mu_{ij}[t]$. We assume that $\mu_{ij}[t]$ is ergodic and comes from a finite but potentially large set ${\mathcal M}$. We assume that the mobile has perfect knowledge of the channel state $\{\mu_{ij}[t]\}_{j=1}^L$ in every time slot. The cumulative distribution function for rate $\mu_{ij}[t]$ is given by $\mbox{Pr}\left(\mu_{ij}[t]=m\right)=\pi_{mi}\left(\alpha_i\left[t\right]\right),\bsp m\in {\mathcal M}$, where $\alpha_i\left[t\right]$ denotes a large-scale fading gain that is dependent on user position and comes from a finite set $\alpha_i[t]\in\Omega$. Users change positions once every $T$ slots where $T\in\{1,2,3,\ldots\}$ denotes the large-scale fading coherence time. For ease of notation, we introduce a counter $\bar{t}=\lfloor\frac{t}{T}\rfloor T$ to keep track of the slower large-scale fading time-scale, i.e., $\pi_{mi}\left(\alpha_i[t]\right)=\pi_{mi}\left(\alpha_i[\bar{t}]\right),\bsp \forall t$. For convenience, we also set $\pi_{mi}[\bar{t}]=\pi_{mi}\left(\alpha_i[\bar{t}]\right)$ making implicit the dependence on $t$ and $T$. Note that the large-scale coefficient is typically only distance-dependent and independent of frequency allowing us to omit the index $j$ when representing it. We assume that the base station has perfect knowledge of $\{\alpha_{i}[\bar{t}]\}$ and all distribution information $\{\pi_{mi}[\bar{t}]\}$. \textit{Most importantly, $\bar{t}$ represents the time-scale at which feedback optimizations and scheduling assignments are decided.}

\textbf{Traffic model}: Each user $k\in\{1,2,\ldots,K\}$, has a queue of untransmitted packets with queue-length $q_k[\bar{t}]$ that is maintained at the base station with associated arrival rate $\lambda_k$.

\textbf{Feedback model}: The base station allocates a feedback budget of $b_k[\bar{t}]$ bits to user $k$ such that $\sum_{k=1}^K b_k[\bar{t}]\leq B$ where $B$ represents the total limited feedback budget of the system. Let the sub-carriers in our OFDMA system be indexed by ${\mathcal S}=\{1,2,\ldots,L\}$. Assume that user $k$ is allocated sub-bands ${\mathcal N}_k[\bar{t}]\subseteq {\mathcal S}$ by the slow scheduling algorithm. Given a budget of $b_k[\bar{t}]$ bits by the base station and an assignment ${\mathcal N}_k[\bar{t}]$ of size $|{\mathcal N}_k[\bar{t}]| = n_k$, the user employs a codebook of size $b_{kj}[\bar{t}],\bsp j\in {\mathcal N}_k[\bar{t}]$ bits for sub-band $j$ where $\sum_{j\in {\mathcal N}_k}b_{kj}[\bar{t}]=b_k[\bar{t}]$; $\{b_{kj}[\bar{t}]\}_{j\in {\mathcal N}_k}$ represents the per-sub-band budgets for user $k$ that are chosen by the user to maximize rate.

\textbf{Quantized channel state}: Given a budget of $b_k[\bar{t}]$ and a sub-band assignment ${\mathcal N}_k[\bar{t}]$, the actual post-quantization rate achieved by user $k$ at time $t$ where $\bar{t} \leq t \leq \bar{t} + 1,$ on sub-band $j\in {\mathcal N}_k[\bar{t}]$ is denoted by $\mu_{kj}^q[b_{kj}[\bar{t}],\mu_{ij}[t]]$. Note that the actual achievable rate is determined by the quantization budgets (along with the codebook of course), that are decided on the slower time-scale indexed by $\bar{t}$, as well as the true state of the channel at current time $t$.

\textbf{Network state}: The network state at time $\bar{t}$ is given by $M[\bar{t}]=\left(\{\pi_{mi}[\bar{t}]\}_{i=1}^K,\{q_{k}[\bar{t}]\}_{i=1}^K\right)$, which is a collection of channel distributions and queue lengths on the slower time-scale. In general, the feedback allocation and slow scheduling policies make allocation and assignment decisions, respectively, for $T$ time slots $\bar{t}< t < \bar{t} + 1$ based on state $M[\bar{t}]$.

\textbf{Expected rates and \emph{virtual} users}: Let $r_{kj}[\alpha_k[\bar{t}],b_{kj}[\bar{t}]] = {\mathbb E}_{\pi_{mi}[\bar{t}]}\left[\mu_{kj}^q[b_{kj}[\bar{t}],\mu_{ij}[t]]\right]$ denote the expected rate (through the course of $T$ time slots) for user $k$ on sub-band $j$. The total expected rate that is achieved by user $k$ given a sub-band assignment ${\mathcal N}_k[\bar{t}]$ and allocation $b_k[\bar{t}]$ is then given by $\sum_{j\in{\mathcal N}_k[\bar{t}]}r_{kj}[\alpha_k[\bar{t}],b_{kj}[\bar{t}]].$ We make an observation at this point that helps us simplify the presentation of the results. Since the rate is additive across sub-bands, and is a function of a band-independent channel gain, one may consider and analyze an equivalent \emph{virtual} system where the number of users is \emph{equal} to the number of sub-bands. This removes the dependence of the feedback allocation policy on the assignments ${\mathcal N}_k[\bar{t}]$. In other words, the equivalent system would consist of $L$ users assigned to $L$ sub-bands with feedback allocations $\{b_{i^{\prime}}[\bar{t}]\}_{i^{\prime}=1}^L$ and rates $r_{i^{\prime}}[\alpha_{i^{\prime}}[\bar{t}],b_{i^{\prime}}[\bar{t}]]$. As for the queue lengths, one can simply ``replicate'' the same queue length $q_k[\bar{t}]$ for all virtual users $k^{\prime}\in{\mathcal N}_k[\bar{t}]$, i.e., $q_{k^{\prime}}[\bar{t}] = q_k[\bar{t}],\bsp \forall k^{\prime}\in {\mathcal N}_k[\bar{t}]$. Once the optimal feedback allocation $\{b_{i^{\prime}}^*[\bar{t}]\}_{i^{\prime}=1}^L$ and virtual rates $r_{i^{\prime}}[b_{i^{\prime}}^*[\bar{t}]]$ are computed, we can map back to the original system by calculating the true rate for user $k$ as $\sum_{i^{\prime}\in{\mathcal N}_k[\bar{t}]}r_{i^{\prime}}[\alpha_{i^{\prime}}[\bar{t}],b_{i^{\prime}}^*[\bar{t}]]$.

Through the remainder of this paper barring the final simulations section, we study the equivalent system mentioned above where we have $L$ users assigned to $L$ sub-bands. Having defined all the ingredients of our OFDMA downlink network, we move on to the next section where we develop the feedback allocation policy that periodically makes decisions based on the network state.

\section{Throughput-optimal feedback allocation with slow scheduling}
\label{sec:tputopt}
In this section, we develop a feedback allocation (codebook size adaptation) protocol that when operated in conjunction with a given slow data scheduling policy, guarantees \textit{throughput-optimality}. This means that given an arrival rate vector $\boldsymbol{\lambda}$, if there exists any scheduling policy that can guarantee bounded expected queue sizes, then so can the proposed policy, which falls under the MaxWeight family of policies that was pioneered by Tassiulus and Ephremedis \cite{tasseph}.

As mentioned towards the end of the last section, we now have a virtual system with $L$ users assigned to $L$ sub-bands with feedback allocations $\{b_{k^{\prime}}[\bar{t}]\}_{k^{\prime}=1}^L$, rates $\{r_{k^{\prime}}[\alpha_{^{\prime}}[\bar{t}],b_{k^{\prime}}[\bar{t}]]\}_{k^{\prime}=1}^L$ and queues $\{q_{k^{\prime}}[\bar{t}]\}_{k^{\prime}=1}^L$. Through the remainder of this paper, until Section~\ref{sec:simulations}, we replace the index $k^{\prime}$ by $k$ for convenience, with the implicit understanding that we are dealing with virtual users. The feedback allocation policy is presented below.
\begin{algorithm}
\caption{MaxWeight feedback allocation with slow data scheduling}
\label{alg:1}
\begin{algorithmic}[1]
\WHILE{$t\geq 0$}
\IF{$t\bsp (\mbox{mod } T) = 0$}
    \STATE Set $\bar{t} = t$
    \STATE Solve
    \be
    \ba
    \{b_k^*[\bar{t}]\}=&\arg\max& \sum_{k=1}^L q_k[\bar{t}] r_{k}[\alpha_{k}[\bar{t}],b_{k}]\\    &\mbox{w.r.t.}&b_{k}\in\{0,1,\ldots,B\},\bsp \forall k\\
    &\mbox{s.t.}& \sum_{k=1}^L b_k\leq B.\label{jointFBSCH}
    \ea
    \ee
\ENDIF
\ENDWHILE
\end{algorithmic}
\end{algorithm}

A few remarks about the above algorithm are in order:

\noi \emph{(i)} \emph{Throughput-optimality}: The algorithm is throughput-optimal with respect to the space of policies that make feedback allocation and assignment decisions once every $T$ slots. This means that if any policy that makes feedback allocation and assignment decisions once every $T$ slots can stabilize a set of arrival rates $\{\lambda_k\}$, then so can the proposed policy in (\ref{jointFBSCH}). Let the region of rates that can be stabilized by the policy in (\ref{jointFBSCH}) be denoted by ${\mathcal V}$. \revbegin The above notion of throughput optimality for queueing systems has been used extensively in the literature (see \cite{tasseph,akrsvw,shroff,srikant} and references therein)\revend. We do not prove throughput-optimality as it follows from standard Lyapunov techniques that are well-established in the queueing literature \cite{akrsvw}.

\noi \emph{(ii)} \emph{Computational complexity}: While the optimization problem characterizes optimal performance, solving it exactly may be computationally prohibitive. In fact, a brute-force approach to solving (\ref{jointFBSCH}) would incur a complexity of ${\mathcal O}\left({B+L-1 \choose L-1}\right)$.\\

The final remark forms the basis for the remainder of this paper. The brute-force approach to solving (\ref{jointFBSCH}) is clearly infeasible from an implementation perspective. We take up the issue of complexity starting in Section~\ref{sec:optsol} and propose a host of computationally-efficient algorithms to solve the feedback allocation problem in (\ref{jointFBSCH}). We wish to highlight that all algorithmic developments can be applied to full-buffer (saturated) systems where scheduling schemes such as \textit{proportional fairness} become applicable. This is because most schedulers of interest solve a weighted sum-rate maximization problem at each instant \cite{stolyar_noqueues}.

\section{Optimal feedback allocation through dynamic programming}
\label{sec:optsol}

In Section~\ref{sec:tputopt}, we have established that for queue stability, we are interested in solving the following online weighted sum-rate maximization problem
\begin{equation}
\begin{array}{rcl}
\max_{\{ b_k\}\in{\mathcal B}}& \sum_{k=1}^L q_k[\bar{t}] r_{k}[\alpha_{k}[\bar{t}],b_{k}].\label{sumrate_online}
\ea
\end{equation}
The form of the functions $\{r_{k}[\alpha_{k}[\bar{t}],b_{k}]\}$ would of course depend on the underlying physical system and is intimately connected to the computational complexity of the problem. In fact, for complex modulation/coding schemes the function might only be available as a look-up table. While the optimization problem characterizes optimal performance, solving it exactly may be computationally prohibitive. Thus, the focus of this paper becomes algorithmic. We propose novel solutions to (\ref{sumrate_online}) through Sections~\ref{sec:optsol} and \ref{sec:redcomp} that explore the natural tradeoffs between accuracy, complexity and the structure of the weighted sum-rate function. We start by showing that by using Dynamic Programming, the exact solution can be obtained in pseudo-polynomial time.
\begin{theorem}
\label{thm_stability}
The online resource allocation problem (\ref{sumrate_online}) can be solved exactly in time ${\mathcal O}\left(LB^2\right)$.
\end{theorem}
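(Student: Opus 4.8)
The plan is to recast (\ref{sumrate_online}) as a sequential resource-allocation problem and to solve it by dynamic programming over the users, with the residual budget serving as the state variable. The objective is separable---each summand $q_k[\bar{t}]\,r_{k}[\alpha_{k}[\bar{t}],b_k]$ depends only on the single allocation $b_k$---and the sole coupling is the scalar constraint $\sum_k b_k\le B$, so the problem has the classic separable resource-allocation (knapsack-type) structure. First I would define, for each $k\in\{1,\ldots,L\}$ and each residual budget $c\in\{0,1,\ldots,B\}$, the value function
\be
\ba
V_k(c) &\eqdef& \max \sum_{i=1}^{k} q_i[\bar{t}]\,r_{i}[\alpha_{i}[\bar{t}],b_i]\\
&\mbox{s.t.}& \textstyle\sum_{i=1}^{k} b_i \le c,\ b_i\in\{0,\ldots,c\},
\ea
\ee
i.e. the best weighted sum-rate obtainable from the first $k$ users using no more than $c$ feedback bits. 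The quantity we ultimately want is $V_L(B)$.

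Second, I would establish the Bellman recursion
\be
V_k(c)=\max_{0\le b\le c}\Big( q_k[\bar{t}]\,r_{k}[\alpha_{k}[\bar{t}],b] + V_{k-1}(c-b)\Big),
\ee
with base case $V_0(c)=0$ for all $c$. Correctness follows from the standard principle of optimality: in any feasible allocation the $k$-th user consumes some $b\le c$ bits, after which the remaining $k-1$ users must optimally share the leftover budget $c-b$, which is precisely $V_{k-1}(c-b)$; conversely, every choice of $b$ combined with an optimal sub-allocation is feasible for $V_k(c)$. A point worth emphasizing is that the original constraint is an \emph{inequality} $\sum_k b_k\le B$ rather than an equality, but this is handled transparently: since the base case is identically zero and the inner maximization ranges over all $b\le c$, any unused budget is simply discarded, so no monotonicity or padding assumption on the rate functions is needed. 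This matters because the theorem statement imposes \emph{no} structure on the $r_{k}$---indeed they may be available only as look-up tables---so the full $(c+1)$-way maximization cannot be shortcut.

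Finally, I would account for the running time. The table holds $L(B+1)$ entries $V_k(c)$; computing each entry via the recursion requires scanning the $c+1\le B+1$ candidate values of $b$, and each candidate is evaluated in $O(1)$ time given the pre-tabulated rates $r_{k}[\alpha_{k}[\bar{t}],\cdot]$ and the already-computed row $V_{k-1}(\cdot)$. Hence the total work is $\sum_{k=1}^{L}\sum_{c=0}^{B}(c+1)=O(LB^2)$, and backtracking through the stored argmaxima recovers the optimal allocation $\{b_k^*[\bar{t}]\}$ within the same bound, giving the claimed ${\mathcal O}(LB^2)$ complexity. The only genuine subtlety---rather than a real obstacle---is the careful bookkeeping of the inequality constraint described above; the remainder is a direct complexity count. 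I would also note that $O(LB^2)$ is \emph{pseudo}-polynomial, since it scales with the numerical value of $B$ rather than its bit-length, which is exactly the qualifier in the statement.
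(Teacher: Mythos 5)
Your proposal is correct and follows essentially the same route as the paper: the value function $V_k(c)$ is the paper's $R(k,b)$, the Bellman recursion matches the paper's $R(k,b)=\max_{j=0,\ldots,b}\{R(k-1,b-j)+A(k,j)\}$, and the complexity count $\sum_{k=1}^{L}\sum_{c=0}^{B}(c+1)=O(LB^2)$ is identical. Your added remarks on handling the inequality constraint via the zero base case and on the pseudo-polynomial qualifier are accurate but do not change the argument.
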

\begin{proof} Order the users arbitrarily. \revbegin We choose to work with the existing order w.l.o.g. For any given arbitrary weights $\{w_i\},\bsp w_i > 0$, define $A(i,j)\eqdef w_i r_{k}[\alpha_{i},j]$ to be the weighted rate for user $i$ given we allocate $j$ bits to this user and define $R(k,b)\eqdef \max_{\sum_{i=1}^k b_i\leq b,\bsp b_i\in{\mathbb N}_0} \sum_{i=1}^k w_ir_i[\alpha_i,b_i]$ to be the maximum weighted sum-rate if we have $b$ bits to allocate amongst the first $k$ users with $R(0,b)=0$\revend. It follows that $R(1,b)=A(1,b),b=0,\ldots,B$. We can write a recursion $R(k,b)=\max_{j=0,\ldots,b} \left\{ R(k-1, b-j) + A(k,j) \right\}$. The optimality of this recursion can be established using standard induction arguments similar to the two-dimensional knapsack problem \cite{kleinberg}. This rule gives rise to a table with a total of $L(b+1)$ elements. In order to compute element $(k,b)$ in the table, using our recursion, we incur a complexity of ${\mathcal O}(b+1)$. Hence, the total complexity can be calculated as $\sum_{k=1}^L\sum_{b=0}^B (b+1)=L \sum_{b=0}^B (b+1) = L\frac{(B+1)(B+2)}{2}={\mathcal O}(LB^2)$.
\end{proof}
Thus, we have proposed an exact solution using dynamic programming, which has pseudo-polynomial\footnote{An algorithm has pseudo-polynomial complexity if its running time is a polynomial in the size of the input in unary. The size of the input to (\ref{sumrate_online}) in unary at most $LBA_{\max}+B={\mathcal O}(LB)$ where $A_{\max}=\max_{(i,j)}A(i,j)$.} complexity ${\mathcal O}\left(LB^2\right)$ and which is applicable to \emph{any} type of weighted sum-rate function. \revbegin Therefore, in contrast to the joint power-control and scheduling problems in \cite{tan1,tan2,huang} and owing to the discrete nature of the feedback allocation problem we consider in (\ref{sumrate_online}), we do not require any special channel-induced properties of the objective function such as those imposed on its partial derivatives in Lemma $2$ of \cite{tan1}, in order to find an optimal solution.\revend

\revbegin Note that $R(K,B)$ in Theorem~\ref{thm_stability} with $w_i = q_i,\bsp \forall i$, is equal to (\ref{sumrate_online}) and dynamic programming essentially gives us a technique to compute $R(K,B)$ by solving smaller sub-problems. The following toy example with $K=2$ users and a total bit budget of $B = 2$ bits illustrates a typical series of computations en route to calculating $R(2,2)$.

\emph{Example (Dynamic programming)}: Order the two users arbitrarily, say user $1$ first followed by user $2$. Then, initialize the following weighted rates appropriately for $b = 0,1,2$,
\begin{equation*}
\begin{array}{rcl}
R(1,b) = A(1,b)&:& \mbox{when user $1$ is allocated $b$ bits}\\
R(2,b) = A(2,b) &:& \mbox{when user $2$ is allocated $b$ bits}.\\
\ea
\end{equation*}
Once initialized, we then compute value $R(2,1) = \max\{R(1,1)+A(2,0),R(1,0)+A(2,1)\} = \max\{R(1,1),A(2,1)\}$. Finally, we calculate
\begin{equation*}
\ba
R(2,2) &= &\max\{R(1,2)+A(2,0),R(1, 1)+A(2,1),\\
&&R(1,0)+A(2,2\}\\
 &= &\max\{R(1,2),R(1,1)+A(2,1),A(2,2)\},
 \ea
 \end{equation*}
the desired optimal weighted sum-rate with two users and two bits.\revend

To understand the computational requirements in the context of a real-world system, consider the LTE example that was presented in the introduction to this paper. Here, we had the following parameters: $L=50$, $K=50$ with $4$-bit modulation/coding tables at the mobiles. To model the limited feedback constraint, let $B = 4cL,\bsp c\in\{1,2,...,K-1\}$, which has an intuitive interpretation of being able to provide full feedback to at most $c$ users; $c = K$ represents no constraint on feedback resources for this setting. Then, a feedback bandwidth of $c = \frac{K}{4}$ corresponds to a complexity of roughly $7\times 10^{11}$ operations, which is clearly quite daunting. Thus, while the dynamic programming approach is indeed viable for sufficiently small systems, we require algorithms with faster running times that might be less accurate. This forms the focus of the remainder of this paper.

\section{Reduced-complexity resource allocation}\label{sec:redcomp}
In this section, we show that if the weighted sum-rate functions have additional structure, we can develop faster algorithms. As is often done for computationally hard problems, one seeks efficient but potentially sub-optimal algorithms, but then proves lower bounds on the performance. In this vein, we develop more computationally efficient algorithms that approximately solve (\ref{sumrate_online}), and provide theoretical lower bounds on their performance. The long-term performance of these approximate algorithms in achieving queue stability is characterized by Theorem~\ref{approxstability} below. The proof is omitted as these are well-known results in queuing systems.

We say that an algorithm is a $\gamma$-approximation, $\gamma\in(0,1]$, to (\ref{jointFBSCH}) if it provides a solution $\{b^{alg}_k\}$ such that $
\sum_k q_k[\bar{t}] r_{k}[\alpha_{k}[\bar{t}],b_{k}^{alg}]\geq  \gamma \max_{\{b_k\}\in{\mathcal B}} \sum_k q_k[\bar{t}] r_{k}[\alpha_{k}[\bar{t}],b_k]$. The following theorem is a generalization of the original result by Tassiulus and Ephremedis \cite{tasseph}. It essentially states that local approximation is consistent with the long-term objectives we consider.
\begin{theorem}\label{approxstability}
If $\boldsymbol{\lambda}\in\{\gamma \boldsymbol{\nu} : \boldsymbol{\nu} \in {\mathcal V}\},\gamma\in(0,1]$, then a $\gamma$-approximation to the per-instant scheduling rule $
\{b^*[t]\}=\arg \max_{\{b_k\}\in{\mathcal B}} \sum_k q_k[\bar{t}] r_{k}[\alpha_{k}[\bar{t}],b_k]$ stabilizes the system.
\end{theorem}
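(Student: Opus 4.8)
The plan is to prove stability by a Foster--Lyapunov drift argument, adapted to the $T$-slot block structure and carrying the approximation factor $\gamma$ through one key comparison inequality; this is precisely the generalization of the Tassiulas--Ephremedis template \cite{tasseph,akrsvw} that the statement advertises. I would take the quadratic Lyapunov function $V(\mathbf{q}) = \sum_{k=1}^L q_k^2$ and analyze its expected drift across one large-scale fading block, i.e. from $\bar{t}$ to $\bar{t}+1$. Writing $A_k[\bar{t}]$ for the arrivals to queue $k$ over the block (so $\mathbb{E}[A_k[\bar{t}]] = T\lambda_k$) and $R_k^{alg}[\bar{t}]$ for the service delivered under the $\gamma$-approximate allocation, the recursion $q_k[\bar{t}+1] \leq (q_k[\bar{t}] - R_k^{alg}[\bar{t}])^+ + A_k[\bar{t}]$ gives, after squaring and conditioning on $\mathbf{q}[\bar{t}]$,
\[
\mathbb{E}\!\left[V(\mathbf{q}[\bar{t}+1]) - V(\mathbf{q}[\bar{t}]) \,\middle|\, \mathbf{q}[\bar{t}]\right] \;\leq\; B_0 \;-\; 2\sum_{k=1}^L q_k[\bar{t}]\Big(\mathbb{E}[R_k^{alg}[\bar{t}]\mid \mathbf{q}[\bar{t}]] - T\lambda_k\Big),
\]
where the constant $B_0$ collects the second-order terms and is finite because the arrivals have finite mean and the rates are drawn from the finite set ${\mathcal M}$.

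The crux is to lower-bound the negative term using the approximation guarantee. I would invoke the standard characterization of the stability region: since $\boldsymbol{\nu}\in{\mathcal V}$ (taken, as usual, strictly interior so that some $\boldsymbol{\nu}'\in{\mathcal V}$ satisfies $\nu'_k \geq \nu_k + \delta$ for a slack $\delta>0$), there is a stationary randomized feedback allocation whose per-slot expected service vector equals $\boldsymbol{\nu}'$. For every channel realization this stationary policy picks a feasible allocation in ${\mathcal B}$, so the per-instant MaxWeight objective $\sum_k q_k r_k[\alpha_k,b_k]$ dominates it pointwise; taking expectations over the fading process and summing over the $T$ slots of the block yields $\sum_k q_k[\bar{t}]\,\mathbb{E}[R_k^{\max}[\bar{t}]] \geq T\sum_k q_k[\bar{t}]\,\nu'_k$. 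Applying the $\gamma$-approximation inequality slot by slot then propagates the factor through to $\sum_k q_k[\bar{t}]\,\mathbb{E}[R_k^{alg}[\bar{t}]] \geq \gamma T\sum_k q_k[\bar{t}]\,\nu'_k$.

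Substituting this bound and using $\lambda_k = \gamma\nu_k$, the leading terms against $\boldsymbol{\nu}$ cancel and the slack survives: the negative term is at least $\gamma T\delta\sum_k q_k[\bar{t}]$, so the drift is at most $B_0 - 2\gamma T\delta\sum_k q_k[\bar{t}]$. This is strictly negative whenever $\sum_k q_k[\bar{t}]$ exceeds $B_0/(2\gamma T\delta)$, i.e. outside a bounded set. The Foster--Lyapunov theorem then delivers positive recurrence and, by the standard argument summing the drift inequality and telescoping, a uniform bound on $\limsup_{\bar{t}} \frac{1}{\bar{t}}\sum_{\tau<\bar{t}}\sum_k \mathbb{E}[q_k[\tau]]$, establishing stability.

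The main obstacle I anticipate lies entirely in the second step: making precise the equivalence between membership in ${\mathcal V}$ and the existence of a stationary randomized policy attaining the claimed expected service rates, and then verifying that the per-instant $\gamma$-approximation---which is stated only in terms of the \emph{maximum} weighted sum-rate over feasible allocations---cleanly dominates that stationary policy after the expectation over the fading and the sum over the $T$-slot block have been taken. Once this comparison inequality is secured, the propagation of $\gamma$, the cancellation against the arrival rate, and the invocation of the drift criterion are routine, which is why the authors are content to omit the details.
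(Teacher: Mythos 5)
The paper offers no proof of this theorem---it explicitly defers to ``standard Lyapunov techniques'' from the queueing literature \cite{tasseph,akrsvw}---and your quadratic-drift argument over $T$-slot blocks, with the $\gamma$ factor carried through the comparison against a stationary randomized policy, is exactly the standard argument being invoked, so your proposal is correct and matches the intended route. The obstacle you flag at the end is milder than you fear: the $\gamma$-approximation is defined directly on $\sum_k q_k[\bar t]\, r_k[\alpha_k[\bar t],b_k]$, and $r_k[\alpha_k[\bar t],b_k]$ is already the \emph{expected} per-slot rate over the block given $\alpha_k[\bar t]$, so the domination of the stationary policy holds at exactly the level at which the approximation guarantee is stated; the only genuine caveat is the one you already note, namely that strict negative drift requires $\boldsymbol{\nu}$ to lie in the interior of ${\mathcal V}$ (or an $\epsilon$-scaled version of it), a looseness present in the theorem statement itself.
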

Recall from remark {\em (i)} in Section~\ref{sec:tputopt} that ${\mathcal V}$ represents the region of rates that are stabilizable by Algorithm~\ref{alg:1}. The theorem essentially states that for queueing systems: If we calculate a $\gamma$-approximate solution to (\ref{sumrate_online}) in every time slot, one can achieve a $\gamma$-fraction of the \revbegin throughput \revend region. This result paves the way for the design of computationally-efficient algorithms, by constructing approximations to (\ref{sumrate_online}).

In Section \ref{submodularsection}, we consider weighted sum-rate functions that are non-decreasing and sub-modular in the feedback bit allocation. In short, sub-modularity refers to \textit{diminishing returns} with respect to the allocation of resources. This is a property that is exhibited quite frequently by wireless systems in general, e.g., point-to-point capacity scales logarithmically in transmit power, achievable rates in multiple antenna precoding systems exhibit diminishing returns in the size of the codebook \cite{love_RVQ1,love_grass}, etc. In the developments that follow, we exploit this property in order to propose a {\em greedy} feedback allocation algorithm that has complexity ${\mathcal O}((B+L)\mbox{log}_2 L)$ with approximation factor of $\left(1-\frac{1}{e}\right)$. Our main contributions are contained in Lemma~\ref{SM_our_function} and Theorem~\ref{greedycomp}.

In Section~\ref{mimowaterfilling}, we focus on a class of weighted sum-rate functions that arise in downlink scenarios where the base station is equipped with multiple antennas and performs transmit beamforming with quantized beamformer feedback. This is a popular transmission strategy that been extensively researched \cite{gesbert_MIMO,raghavan,love_RVQ1,love_RVQ2} and adopted into standards such as W-CDMA \cite{WCDMAbook} and LTE \cite{ltebook1}. We show that for this choice of physical layer scheme, the weighted sum-rate maximization problem in (\ref{sumrate_online}) is sub-modular for certain types of beamformer quantizers. We illustrate the improvement in computational performance by using the LTE example from the introduction.

\subsection{Reduced-complexity resource allocation through sub-modularity}\label{submodularsection}
We begin this section with a quick primer on sub-modular optimization (summarized from \cite{goundan,calinescu,vondrak}) that will be useful for our purposes. In keeping with the literature, the approach pursued in this section is graph theoretic in contrast to the rest of this paper. A sub-modular function is defined as follows: Let $E$ be a finite set and $2^E$ represent all its subsets. Then, $F:2^{E}\rightarrow{\mathbb R}_{+}$ is a \textit{non-decreasing}, \textit{normalized}, \textit{sub-modular} function if $F(\emptyset)=0$ (normalized), $F(A)\leq F(B)$ when $A\subseteq B\subseteq E$ (non-decreasing) and if $F(A\cup \{e\})-F(A)\geq F(B\cup \{e\})-F(B),\bsp \forall A\subseteq B\subseteq E$ and $e\in E\setminus B$ (sub-modular).

The following property of sub-modular functions is useful for our analysis.
\begin{lemma}\label{additivity}
If $F_n(\cdot),\bsp n=1,\ldots,N$, are sub-modular on set $E$, then $\sum_{n=1}^N w_n F_n(A),\bsp A\subseteq E$ is a sub-modular function for $w_n\geq 0,\forall n$.
\end{lemma}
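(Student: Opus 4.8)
The plan is to verify the defining inequality for submodularity directly, pushing it through the weighted sum one term at a time. Write $G(A)\eqdef\sum_{n=1}^N w_n F_n(A)$ for the candidate function, and fix arbitrary sets $A\subseteq B\subseteq E$ together with an element $e\in E\setminus B$. The goal is to establish $G(A\cup\{e\})-G(A)\geq G(B\cup\{e\})-G(B)$, which is precisely the submodularity condition stated just before the lemma.

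First I would invoke the submodularity of each individual $F_n$, which supplies $F_n(A\cup\{e\})-F_n(A)\geq F_n(B\cup\{e\})-F_n(B)$ for every $n$. The key observation is that, because each weight satisfies $w_n\geq 0$, multiplying this inequality by $w_n$ preserves its direction, giving $w_n[F_n(A\cup\{e\})-F_n(A)]\geq w_n[F_n(B\cup\{e\})-F_n(B)]$. Summing these $N$ inequalities over $n$ and regrouping the terms on each side then recovers exactly $G(A\cup\{e\})-G(A)\geq G(B\cup\{e\})-G(B)$, completing the argument.

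Since the only structural fact used is that a non-negative combination of inequalities all pointing the same way again points the same way, there is no genuine obstacle here; the single point requiring care is the non-negativity hypothesis $w_n\geq 0$, without which the term-by-term multiplication could reverse the inequality for some $n$ and invalidate the summation. (If one additionally wished to preserve the \emph{non-decreasing} and \emph{normalized} properties, the same linearity argument applies verbatim, but the statement asks only for submodularity.)
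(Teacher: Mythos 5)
Your proof is correct: fixing $A\subseteq B\subseteq E$ and $e\in E\setminus B$, applying the submodularity inequality to each $F_n$, scaling by $w_n\geq 0$, and summing is exactly the standard argument, and you correctly identify non-negativity of the weights as the one hypothesis doing real work. The paper states this lemma without any proof (treating it as a well-known closure property of submodular functions), so there is nothing to compare against; your direct verification is the canonical way to fill that gap.
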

Having provided the definition of sub-modularity along with a useful property, we now introduce the kinds of constraint sets that are typically considered in the context of sub-modular optimization: \textit{(i)} A set system $(E,{\mathcal I})$ where $E$ is a finite set and ${\mathcal I}$ is a collection of subsets of $E$ is called an \textit{independence system} if $\emptyset\in{\mathcal I}$ and satisfies if $A\subseteq B$ for $B\in {\mathcal I}$, then $A\in {\mathcal I}$. \textit{(ii)} An independence system is called a \textit{matriod} if it satisfies the following additional property; if $A,B\in {\mathcal I}$ and $|A|< |B|$, then there exists $e\in B\setminus A$ such that $A\cup \{e\}\in {\mathcal I}$. (iii) Set ${\mathcal I}$ is a \emph{uniform matroid} if ${\mathcal I}=\{F\subseteq E:|F|\leq k\}$ for $k\in{\mathbb N}$.

The optimization problem that has been considered in the context of sub-modular functions and independence systems is
\be
\bs
F^{*}=\max_{A\in{\mathcal I},A\subseteq E} F(A).\label{submod}
\es
\vspace{-0.1cm}
\ee
Since many NP-hard problems can be reduced to a sub-modular function maximization over an independence system, significant research has focused on developing efficient approximation algorithms. In particular, the performance of the greedy algorithm in solving special cases of (\ref{submod}) has been extensively studied. Nemhauser et al. \cite{nemhauser} considered problem (\ref{submod}) over uniform matroids and showed that the greedy algorithm provides a $(1-\frac{1}{e})$ approximation factor for this special case. Please refer to Goundan et al. \cite{goundan}, Calinescu et al. \cite{calinescu} and Vondrak \cite{vondrak} for a summary of related results on sub-modular function optimization over other families of constraint sets. The greedy algorithm is presented later in the section in the context of our specific feedback allocation problem.

\textbf{Sub-modularity in feedback allocation}: We now show that the optimal bit allocation problem in (\ref{sumrate_online}) may by posed as a sub-modular maximization over a uniform matroid when the rates exhibit sub-modularity. Let ${\mathcal G} = (U, V, E)$ be a bipartite graph where $U$ contains $L$ \textit{user nodes} and $V$ contains $B$ \textit{bit nodes}, both ordered arbitrarily, i.e., $|U|=L$ and $|V|=B$. Let $E$ contain the set of all edges $E=\{e_{kb}:i=1,\ldots,L\mbox{ and }j=1,\ldots,B\}$. Given $A\subseteq E$, we define $|A|_i\eqdef|\{e_{kb}\in A:k=i\}|$ to represent the number of bits allocated to user $i$, i.e., $|A|_i=b_i$. The independence system we are interested in is ${\mathcal I}=\{A\subseteq E:|A|\leq B\}$ where $B$ is the total bit budget. By definition, ${\mathcal I}$ is a uniform matroid and furthermore, ${\mathcal I}$ is the set of \textit{all} valid allocations since if $A\in {\mathcal I}$, then $\sum_{k=1}^L b_k=\sum_{k=1}^L |A|_k\leq B$ and if $A\not\in {\mathcal I}$, then $\sum_{k=1}^L b_k=\sum_{k=1}^L |A|_k=|A|>B$.  Now the weighted sum-rate maximization problem in (\ref{sumrate_online}) in time slot $\bar{t}$ may be re-written as
$$
\begin{array}{rll}
&\max_{\{ b_k\}\in{\mathcal B}}&\sum_{k=1}^L q_k r_{k}[\alpha_{k}[\bar{t}],b_{k}]\\
\equiv&\max&\sum_{k=1}^L q_k r_{k}[\alpha_{k}[\bar{t}],b_{k}]-r_{k}[\alpha_{k}[\bar{t}],0]\\ \label{sumrate_online_submodular}
&\mbox{s.t. }&b_k=|A|_k,\bsp \sum_{k}|A|_k\leq B,\bsp A\subseteq E\\
=&\max_{A\in{\mathcal I}}&\sum_{k=1}^L q_kr_{k}[\alpha_{k}[\bar{t}],|A|_k]-r_{k}[\alpha_{k}[\bar{t}],0].
\ea
$$
The following result becomes immediate.
\begin{lemma}\label{SM_our_function}
If the function $r_{k}[\alpha_{k},b_{k}]$ is non-decreasing and sub-modular in the bit allocation $b_k=|A|_k,\bsp A\subseteq E$ for all users $k=1,\ldots,L,$ and channel states $\alpha_k\in\Omega$, then $\sum_{k=1}^L q_k r_{k}[\alpha_{k},|A|_k]-r_{k}[\alpha_{k},0]$ is a normalized, non-decreasing, sub-modular function on set $E$ for all channel states $\{\alpha_k\}\in{\mathcal Omega}^L$. 
\end{lemma}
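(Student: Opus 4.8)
The plan is to exploit the additive structure of the objective together with Lemma~\ref{additivity}, reducing the claim to a per-user statement. First I would write the objective as $F(A)\eqdef\sum_{k=1}^L q_k\,\big(r_{k}[\alpha_k,|A|_k]-r_{k}[\alpha_k,0]\big)$ and define, for each user $k$, the set function $F_k(A)\eqdef r_{k}[\alpha_k,|A|_k]-r_{k}[\alpha_k,0]$ on the ground set $E$. Since the queue lengths satisfy $q_k\geq 0$, Lemma~\ref{additivity} shows that $F=\sum_k q_k F_k$ is sub-modular as soon as each $F_k$ is, while normalization and monotonicity are plainly preserved under non-negative linear combinations. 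It therefore suffices to verify that every $F_k$ is normalized, non-decreasing and sub-modular on $E$.

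Normalization is immediate: for $A=\emptyset$ we have $|A|_k=0$, so $F_k(\emptyset)=r_{k}[\alpha_k,0]-r_{k}[\alpha_k,0]=0$. For monotonicity, note that if $A\subseteq B$ then the per-user counts satisfy $|A|_k\leq|B|_k$, since every edge incident to user $k$ that lies in $A$ also lies in $B$; because $r_{k}[\alpha_k,\cdot]$ is non-decreasing in its integer argument, it follows that $F_k(A)\leq F_k(B)$.

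The main step is sub-modularity of $F_k$, and here the only subtlety is translating the hypothesized diminishing-returns property of the univariate map $b\mapsto r_{k}[\alpha_k,b]$ into the set-function inequality $F_k(A\cup\{e\})-F_k(A)\geq F_k(B\cup\{e\})-F_k(B)$ for $A\subseteq B$ and $e\in E\setminus B$. I would argue by cases on the edge $e=e_{mb}$. If $m\neq k$, then adding $e$ leaves both $|A|_k$ and $|B|_k$ unchanged, so the two marginal gains are zero and the inequality holds trivially. If $m=k$, then because $e\notin A$ and $e\notin B$, adding $e$ raises each per-user count by exactly one, so the marginal gains equal $r_{k}[\alpha_k,|A|_k+1]-r_{k}[\alpha_k,|A|_k]$ and $r_{k}[\alpha_k,|B|_k+1]-r_{k}[\alpha_k,|B|_k]$ respectively. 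Since $|A|_k\leq|B|_k$, the required inequality is exactly the assertion that the first differences of $r_{k}[\alpha_k,\cdot]$ are non-increasing, i.e.\ the assumed sub-modularity (diminishing returns) of the rate in the bit budget.

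The hard part is really just this last translation: recognizing that the ground set $E$ is partitioned by incidence to users, that $F_k$ depends on $A$ only through the single count $|A|_k$, and that the one-dimensional diminishing-returns hypothesis is precisely what is needed once the observation that adding an absent edge increases the count by exactly one pins each marginal gain to a consecutive first-difference of $r_{k}$. With all three properties established at the per-user level, the additivity lemma closes the argument for $F$.
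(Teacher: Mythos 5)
Your proof is correct and takes essentially the same approach as the paper: decompose the objective as a non-negative weighted sum of per-user terms and invoke Lemma~\ref{additivity}. The paper's own proof is a single line citing that lemma; your additional verification that each $F_k(A)=r_k[\alpha_k,|A|_k]-r_k[\alpha_k,0]$ is itself normalized, non-decreasing and sub-modular on $E$ --- in particular the case analysis reducing the marginal-gain inequality to the non-increasing first differences of $b\mapsto r_k[\alpha_k,b]$ --- is precisely the step the paper leaves implicit, and you carry it out correctly.
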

\begin{proof}
The result follows from Lemma \ref{additivity}.
\end{proof}
Hence, the greedy algorithm can be used to solve the optimal bit allocation problem in (\ref{sumrate_online}) with approximation factor $\left(1-\frac{1}{e}\right)$. The greedy algorithm for the specific case of our bit allocation problem in time slot $\bar{t}$ can be written as follows where $u_{k}(b_k)\eqdef q_k\left(r_{k}[\alpha_{k},b_{k}+1]-r_{k}[\alpha_{k},b_{k}]\right)$ denote the increase in rate or marginal utility if user $k$ is given one extra bit.
\begin{algorithm}
\caption{Greedy feedback allocation}
\label{alg:2}
\begin{algorithmic}[1]
    \STATE Set $b_k=0,\forall k$, which is essentially a bit counter for each user
    \STATE Compute marginal utilities $\{u_{k}(b_k)\}$.
    \WHILE{$\sum_k b_k \leq B$}
    \STATE Sort this list of marginal utilities.
    \STATE Assign a bit to user $k^*$ who is on top of this list.
    \STATE Update $b_{k^*}=b_{k^*}+1$ and re-compute $u_{k^*}(b_{k^*})$
    \ENDWHILE
\end{algorithmic}
\end{algorithm}

We end this section by investigating the complexity of the above algorithm in the following theorem.

\begin{theorem} \label{greedycomp}
The greedy algorithm approximates the optimal bit allocation problem in (\ref{sumrate_online}) to within a factor of $\left(1-\frac{1}{e}\right)$ while incurring complexity ${\mathcal O}((B+L)\mbox{log}_2 L)$.
\end{theorem}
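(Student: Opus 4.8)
The plan is to separate the two claims: the $(1-\tfrac{1}{e})$ approximation guarantee, which follows almost immediately from the structural results already in hand, and the running-time bound $\mathcal{O}\bigl((B+L)\log_2 L\bigr)$, which is where the real work lies.

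For the approximation factor I would argue as follows. By Lemma~\ref{SM_our_function} the reformulated objective $\sum_{k} q_k\bigl(r_k[\alpha_k,|A|_k]-r_k[\alpha_k,0]\bigr)$ is normalized, non-decreasing and sub-modular on the ground set $E$, and the feasible set $\mathcal{I}=\{A\subseteq E:|A|\leq B\}$ is a uniform matroid. The result of Nemhauser et al.~\cite{nemhauser} then guarantees that the greedy algorithm attains a $(1-\tfrac{1}{e})$-approximation when maximizing a non-decreasing sub-modular function over a uniform matroid. It remains only to verify that Algorithm~\ref{alg:2} \emph{is} that greedy algorithm. Since the objective depends on $A$ only through the per-user counts $|A|_k=b_k$, every unused edge of user $k$ contributes the same marginal gain $u_k(b_k)=q_k\bigl(r_k[\alpha_k,b_k+1]-r_k[\alpha_k,b_k]\bigr)$; moreover sub-modularity makes $u_k(\cdot)$ non-increasing, so the most valuable unused edge of user $k$ is always its next bit. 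Hence selecting the edge of globally largest marginal gain reduces exactly to picking $k^*=\arg\max_k u_k(b_k)$ and assigning it one bit, which is precisely the bit-assignment step of Algorithm~\ref{alg:2}.

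The complexity claim is the substantive part. A literal reading of Algorithm~\ref{alg:2} re-sorts the $L$ marginal utilities in each of the $\mathcal{O}(B)$ iterations of the while-loop, costing $\mathcal{O}(BL\log_2 L)$. The improvement rests on a single observation: because $u_k(b_k)$ depends only on $b_k$, assigning a bit to $k^*$ changes the marginal utility of $k^*$ \emph{alone}, leaving every other $u_k$ untouched. Therefore no re-sort is needed; exactly one value must be updated per iteration. I would make this precise by maintaining the marginal utilities in a max-heap (priority queue): computing the $L$ initial utilities and building the heap each cost $\mathcal{O}(L)$; each iteration then performs one extract-max and, after recomputing $u_{k^*}$ in $\mathcal{O}(1)$ time via the look-up table for $r_k$, one re-insertion, for $\mathcal{O}(\log_2 L)$ per iteration. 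Summing over the $\mathcal{O}(B)$ iterations gives $\mathcal{O}(L)+\mathcal{O}(B\log_2 L)=\mathcal{O}\bigl((B+L)\log_2 L\bigr)$.

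The main obstacle here is conceptual rather than computational: recognizing that the per-iteration cost collapses from a full $\mathcal{O}(L\log_2 L)$ re-sort down to an $\mathcal{O}(\log_2 L)$ heap update. This hinges on the separability of the weighted sum-rate across users, which ensures that a single bit assignment perturbs exactly one entry of the priority queue. The only minor point to verify is that evaluating a marginal utility is $\mathcal{O}(1)$, which holds whenever $r_k$ is available in closed form or as a look-up table, as already assumed in Section~\ref{sec:optsol}.
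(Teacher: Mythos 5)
Your proposal is correct and follows essentially the same route as the paper: the approximation factor is obtained by combining Lemma~\ref{SM_our_function} (normalized, non-decreasing, sub-modular objective over the uniform matroid ${\mathcal I}$) with the result of Nemhauser et al.~\cite{nemhauser}, and the complexity bound rests on the same key observation that assigning a bit to $k^*$ perturbs only $u_{k^*}$, so after an initial ${\mathcal O}(L\log_2 L)$ setup each of the ${\mathcal O}(B)$ iterations costs only ${\mathcal O}(\log_2 L)$. Your use of a max-heap rather than re-insertion into a sorted list, and your explicit check that Algorithm~\ref{alg:2} coincides with the edge-level matroid greedy, are minor refinements of the paper's argument rather than a different approach.
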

\begin{proof}
Step $2$ of this algorithm incurs complexity ${\mathcal O}(L\mbox{log}_2 L)$ for the first iteration $b=1$. Subsequently, every re-sort in Step $3$ costs ${\mathcal O}(\mbox{log}_2 L)$ with a maximum of $B$ such re-sorts. Thus, the total complexity is ${\mathcal O}((B+L)\mbox{log}_2 L)$. For the proof of the approximation factor, please refer to Nemhauser et al. \cite{nemhauser}.\end{proof}

In the context of the LTE example introduced earlier, this means that by exploiting the sub-modular structure in the rates, we reduce the complexity from $7\times 10^{11}$ to $15\times 10^3$ operations. \revbegin Before we move on to the next section, we provide an example of a common wireless setting where sub-modularity is exhibited. Consider a traditional point-to-point single antenna wireless link with a $b$-bit modulation-coding table at the receiver. The modulation-coding table is constructed as follows. Given a non-negative real number in the interval $[0,\sigma],\bsp \sigma >> 0$, we uniformly partition the interval into $2^{b}$ sub-intervals and implement the quantization scheme ${\left\lfloor x\right\rfloor}_Q = \frac{i\sigma}{2^b},i\frac{\sigma}{2^b}\leq x < (i+1)\frac{\sigma}{2^b},\bsp i = 0,1,\ldots,2^b-1.$ Then, for any fixed position-dependent gain of $\alpha$, the achievable rate of the system in a fading environment can be written as
\be
r[\alpha,b] = {\mathbb E}_{h}[\log_2(1+{\left\lfloor\sqrt{\alpha} |h|^2\right\rfloor}_Q)],\label{rate_SISOquant}
\ee
where $|h|^2$ is a truncated $\exp(1)$ random variable that has a maximum value of $\sigma >> 0$. The probability density function for such a random variable is given by $f_{|h|^2}(x) = C(\sigma)\exp(-x)$ where $C(\cdot)$ is a normalization factor. Note that this example considers a traditional continuous fading model. One may obtain its discrete version thereby conforming with our system model, by sampling the support $[0,\sigma]$. Thus, the rate expression in (\ref{rate_SISOquant}) may be treated as an approximation that becomes increasingly accurate as we discretize the support more finely. For the case $\alpha  = 1$, the rate (\ref{rate_SISOquant}) can be explicitly computed as
\begin{equation*}
\begin{small}
\ba
r[1,b] &=& C(\sigma)\sum_{i = 0}^{2^b-1} \log_2\left(1+\frac{i\sigma}{2^b}\right) \int_{i\frac{\sigma}{2^b}}^{(i+1)\frac{\sigma}{2^b}} \exp(-x)dx\\
 &=& \left[1-\exp\left(-\frac{\sigma}{2^b}\right)\right]\sum_{i = 0}^{2^b-1} \log_2\left(1+ \frac{i\sigma}{2^b}\right)\exp\left(-i\frac{\sigma}{2^b}\right).
\ea
\end{small}
\end{equation*}
Setting $l[j,b] = \log_2\left(1+\frac{j\sigma}{2^b}\right)$, the normalized incremental gain with one extra bit can be calculated as
\be
\begin{small}
\begin{array}{rl}
&C(\sigma)^{-1}(r[1,b+1] - r[1,b]) \\
=& \left[1-e^{-\frac{1}{2}\frac{\sigma}{2^b}}\right]\sum_{i = 0}^{2\times 2^b-1} l[i,b+1]e^{-i\frac{1}{2}\frac{\sigma}{2^b}}-\left[1-e^{-\frac{\sigma}{2^b}}\right]\\
&\sum_{i = 0}^{2^b-1} l[i,b]e^{-i\frac{\sigma}{2^b}}\\
=&\left[1-e^{-\frac{1}{2}\frac{\sigma}{2^b}}\right]\left[\sum_{j = 0}^{2^b-1} l[2j,b+1]e^{-j\frac{\sigma}{2^b}}+\sum_{j = 0}^{2^b-1} l[2j+1,\right.\\
&\left.b+1]e^{-(2j+1)\frac{\sigma}{2^{b+1}}}\right]- \left[1-e^{-\frac{\sigma}{2^b}}\right]\left[\sum_{i = 0}^{2^b-1} l[i,b]e^{-i\frac{\sigma}{2^b}}\right]\\
&\mbox{by splitting odd and even terms}\\
=&\left[e^{-\frac{1}{2}\frac{\sigma}{2^b}}-e^{-\frac{\sigma}{2^b}}\right]\left[\sum_{j = 0}^{2^b-1}e^{-j\frac{\sigma}{2^b}} \log_2\left(1+\frac{0.5}{\frac{2^b}{\sigma}+j}\right)\right].\label{uniform_quant_loss}
\ea
\end{small}
\ee
Through simple numerical enumeration, one may confirm that the $1$-bit rate gain given above in (\ref{uniform_quant_loss}) decreases over realistic bit sizes of $b\in\{1,2,\ldots,25\}$ and hence, $r[1,b]$ is a sub-modular function. With a little more algebra, one may derive a similar result for the general case with any arbitrary, non-negative, position-dependent gain $\alpha$.

In the next section, we provide another example of a wireless system that exhibits sub-modularity. In particular, we consider a class of multiple antenna wireless links and solve (\ref{sumrate_online}) in the context of these systems.
\revend

\subsection{Reduced-complexity resource allocation for MISO systems }\label{mimowaterfilling}

When the user rates $r_{k}[\alpha,b]$ are sub-modular in the bit allocation $b$ in every channel state $\alpha\in\Omega$, we use the greedy algorithm in Section \ref{submodularsection} to approximately solve the online feedback allocation problem in (\ref{sumrate_online}) with complexity ${\mathcal O}((B+L)\mbox{log}_2 L)$. In this section, we show that $2\times 1$ MISO quantized transmit diversity systems exhibit sub-modular expected rates bringing into use the results from the previous section. Furthermore, \revbegin  in the context of these specific transmission schemes\revend, we develop an approximation algorithm based on convex relaxations with a further-reduced complexity of ${\mathcal O}(L\mbox{log}_2 L)$ and an approximation guarantee of $\frac{1}{2}$ for typical operational signal-to-noise ratios (SNR). Thus, aside from the usual impact on precision that is typically omitted from running time calculations, the running time of our algorithm no longer depends on the feedback budget $B$. In the example above, the running time is reduced even further from $15\times 10^3$ operations to roughly $300$ operations.

We begin this section by investigating the effects of limited feedback on the aforementioned class of MISO systems. It is well-known that the instantaneous SNR for a classical $2\times 1$ single-stream beamforming MISO link is given by $\overline{\text{SNR}}(\alpha)||{\bf h}||^2$ where $\overline{\text{SNR}}(\alpha) = \frac{P\alpha}{N_o}$, $P$ is the transmit power, $N_o$ is the noise power and ${\bf h}=[h_1\bsp h_2]^T,\bsp h_i\in{\mathbb C}$ represents the MISO channel with zero mean, unit variance complex Gaussian entries. As with the example in the previous section, the analytical rate expressions in this section are derived for continuous vector channels, which are increasingly accurate approximations as we sample the support ${\mathbb C}^2$ more finely. Recall from Section~\ref{sec:sysmodel} that $\alpha\in\Omega$ models the effects of large-scale fading. To achieve this maximum instantaneous SNR, the user requires perfect feedback of the channel vector ${\bf h}$. However, feedback in realistic systems is imperfect due to limited feedback budgets and quantization, the primary motivation for this work. We assume that the channel vector $\bf h$ is quantized using the popular \textit{Random Vector Quantization} (RVQ) technique \cite{love_RVQ1,love_RVQ2}. This technique is briefly reviewed in the next section when we present simulation results. Recent results \cite{raghavan,love_RVQ1,love_RVQ2} bound (upper and lower) the loss in rate due to quantization of $\bf h$ when using RVQ codebooks. In particular, both upper and lower bounds on the rate loss due to quantization for user $k$ take the form $c(\alpha_k)2^{-b_k}$ for some $c(\alpha_k)>0$. Motivated by these results, we assume that the post-quantization rate for user $k$ in the presence of large-scale fading takes the form $r_{k}[\alpha_{k},b_{k}] = {\mathbb E}\left[\log_2(1+\overline{\text{SNR}}_k||{\bf h}||^2)\right]-\left({\mathbb E}\left[\log_2(1+\overline{\text{SNR}}_k||{\bf h}||^2)\right]- {\mathbb E}\left[\log_2(1+\overline{\text{SNR}}_k|h|^2)\right]\right)2^{-b_k}$, where we have omitted the dependence on $\bar{t}$ for brevity. We validate the use of the above approximation through numerical testing in the next section for many values of $\alpha_k$ from a typical operational range in a wireless system.

Thus, the optimization in (\ref{sumrate_online}) for the $2\times 1$ MISO case takes the specific form
\begin{equation}
\begin{small}
\begin{array}{rcl}
\mbox{\hspace{-0.25cm}}\max_{\{b_k\}\in{\mathcal B}}&\mbox{\hspace{-0.4cm}} \sum_{k=1}^L q_k\left[ \beta_2(\overline{\text{SNR}}_k)\mbox{\hspace{-0.1cm}}-\mbox{\hspace{-0.1cm}}\left(\beta_2(\overline{\text{SNR}}_k) - \beta_1(\overline{\text{SNR}}_k)\right)2^{-b_k}\right],\label{DP_PHY}
\ea
\es
\end{equation}
where $\text{SNR}_k=\text{SNR}(\alpha_k)$ for short, $\beta_1(\overline{\text{SNR}}) = {\mathbb E}\left[\log_2(1+\overline{\text{SNR}}|h_1|^2)\right]$ and $$\beta_2(\overline{\text{SNR}}) = {\mathbb E}\left[\log_2(1+\overline{\text{SNR}}||{\bf h}||^2)\right]$$ denote the one-tap and two-tap expected rates, respectively, for a Rayleigh fading channel with the given SNR. \\

\noi \textbf{Relaxation and approximation guarantees}: Through the remainder of the section, we develop an approximation algorithm to solve (\ref{DP_PHY}) in closed-form while incurring a complexity of ${\mathcal O}(L\mbox{log }_2 L)$\footnote{We recognize that there is an additional storage cost of ${\mathcal O}(\mbox{log }B)$.}. We provide an approximation guarantee of $\frac{1}{2}$.

\begin{theorem}
\label{thm_DPrelaxation}
Consider the following continuous relaxation of (\ref{DP_PHY}) formed by replacing the discrete set ${\mathcal B}$ with its natural continuous extension and dropping terms that are independent of the variables $\{b_k\}$:
\begin{equation}
\bs
\{b_k^*\}=\arg\min_{\sum_{k}b_k\leq B,\bsp b_k\in{\mathbb R}_{+}} \sum_{k=1}^L q_k \beta_1(\overline{\text{SNR}}_k)2^{-b_k}.\label{waterfilling_objective}
\es
\end{equation}
The solution to this relaxation is $b_k^*=\left[-\log_2\left(\frac{\eta^*}{q_k (\log\bsp 2)}\frac{1}{\beta_1\left(\overline{\text{SNR}}_k\right)}\right)\right]^+$, where $\eta^*$ is chosen such that $\sum_{k}b_k^*=B$ and $[x]^+=\max\{x,0\}$.
\end{theorem}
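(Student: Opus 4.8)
The plan is to treat (\ref{waterfilling_objective}) as a standard convex program and solve it through its Karush--Kuhn--Tucker (KKT) conditions. First I would observe that the objective $g(\{b_k\}) = \sum_{k=1}^L q_k \beta_1(\overline{\text{SNR}}_k) 2^{-b_k}$ is convex in each $b_k$, since $2^{-b_k} = e^{-(\log 2) b_k}$ is convex and $q_k \beta_1(\overline{\text{SNR}}_k) \geq 0$, while the feasible region $\{b_k \geq 0,\ \sum_k b_k \leq B\}$ is a polytope. Hence the problem is convex, Slater's condition holds trivially, and the KKT conditions are both necessary and sufficient for global optimality. I would also note at the outset that $g$ is strictly decreasing in each $b_k$ (assuming $q_k, \beta_1 > 0$), so at any optimum the budget constraint must bind, $\sum_k b_k^* = B$; otherwise one could increase some $b_k$ and strictly decrease the objective.

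Next I would form the Lagrangian with a multiplier $\eta \geq 0$ for the budget constraint and $\nu_k \geq 0$ for each non-negativity constraint, yielding the stationarity condition $-q_k \beta_1(\overline{\text{SNR}}_k)(\log 2) 2^{-b_k} + \eta - \nu_k = 0$. For users with $b_k^* > 0$, complementary slackness forces $\nu_k = 0$, and solving gives $2^{-b_k^*} = \eta / (q_k \beta_1(\overline{\text{SNR}}_k) \log 2)$, i.e. $b_k^* = -\log_2\big(\eta / (q_k (\log 2)\, \beta_1(\overline{\text{SNR}}_k))\big)$. For users with $b_k^* = 0$ the sign condition $\nu_k \geq 0$ reduces to $\eta \geq q_k \beta_1(\overline{\text{SNR}}_k) \log 2$, which is precisely the regime where the unconstrained expression above is non-positive. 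Combining the two cases produces the stated projected form $b_k^* = \big[-\log_2(\eta/(q_k(\log 2)\beta_1(\overline{\text{SNR}}_k)))\big]^+$, recovering the water-filling structure advertised by the section title.

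The step that still requires care is pinning down $\eta^*$ and confirming it is well-defined. Here I would argue that the map $\eta \mapsto \sum_k b_k^*(\eta)$, where $b_k^*(\eta)$ is the projected expression above, is continuous and strictly decreasing wherever the sum is positive, tending to $+\infty$ as $\eta \to 0^+$ and vanishing once $\eta$ exceeds $\max_k q_k \beta_1(\overline{\text{SNR}}_k) \log 2$. By the intermediate value theorem there is then a unique $\eta^* > 0$ with $\sum_k b_k^*(\eta^*) = B$, matching the binding budget constraint established earlier. I expect the only mild subtlety to be the bookkeeping of which users are \emph{active} (receive positive bits) as $\eta^*$ varies, exactly as in classical water-filling; the monotonicity of each $b_k^*(\eta)$ guarantees the active set is nested, so the threshold is unambiguous and the minimizer is unique. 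Convexity then certifies that this KKT point is the global optimum, completing the proof.
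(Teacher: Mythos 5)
Your proposal is correct and follows essentially the same route as the paper's proof: establish convexity and Slater's condition, note the budget constraint must bind, and read off the water-filling form from the KKT stationarity and complementary-slackness conditions. The only addition is your explicit intermediate-value-theorem argument for the existence and uniqueness of $\eta^*$, which the paper leaves implicit but which is a welcome (and correct) piece of bookkeeping.
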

\begin{proof}See Appendix~\ref{proofs_MIMO}.\end{proof}
Next, we comment on the complexity of computing the above fractional solution.
\begin{theorem}
\label{thm_DPcomplexity}
Computing the above solution in Theorem~\ref{thm_DPrelaxation} incurs a complexity of ${\mathcal O}(L\mbox{log}_2 L)$.
\end{theorem}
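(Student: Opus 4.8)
The plan is to recognize the closed-form solution of Theorem~\ref{thm_DPrelaxation} as the output of a single water-filling sweep and then to bound the cost of carrying out that sweep. First I would collapse the user-dependent constants by defining $d_k \eqdef q_k(\log 2)\beta_1(\overline{\text{SNR}}_k)$ and the scalar threshold $\theta \eqdef \log_2\eta^*$, so that the optimal allocation reads $b_k^* = \left[\log_2 d_k - \theta\right]^+$. In this form the only unknown is $\theta$, fixed implicitly by the budget constraint $\sum_{k=1}^L \left[\log_2 d_k - \theta\right]^+ = B$. The key structural observation is that the left-hand side, viewed as a function of $\theta$, is continuous, piecewise-linear, and monotonically decreasing; hence the constraint pins down a unique $\theta$ (equivalently a unique $\eta^*$), and the whole task reduces to locating this root efficiently.

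Next I would exploit the fact that the \emph{active set} $\{k : \log_2 d_k > \theta\}$ is governed by a single threshold on the quantities $\log_2 d_k$. Consequently, if the users are reordered so that $\log_2 d_{(1)} \geq \log_2 d_{(2)} \geq \cdots \geq \log_2 d_{(L)}$, the active set must be a prefix $\{(1),\ldots,(m)\}$ of this order for some $m$. For a fixed prefix size $m$ the budget equation becomes linear and solves in closed form,
\be
\theta_m = \frac{1}{m}\left(\sum_{i=1}^m \log_2 d_{(i)} - B\right),
\ee
and the correct $m$ is precisely the one for which $\theta_m$ lies in the interval $\left[\log_2 d_{(m+1)},\, \log_2 d_{(m)}\right)$, i.e. for which user $(m)$ is active while user $(m+1)$ is not. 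This consistency check is what certifies that the prefix of size $m$ is the true active set.

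Turning to complexity: computing the $L$ values $\log_2 d_k$ costs ${\mathcal O}(L)$, and sorting them costs ${\mathcal O}(L\log_2 L)$, which I expect to be the dominant term. After pre-computing the prefix sums $\sum_{i=1}^m \log_2 d_{(i)}$ in a single ${\mathcal O}(L)$ pass, each candidate $\theta_m$ is evaluated in ${\mathcal O}(1)$, so sweeping $m = 1,\ldots,L$ and applying the interval test identifies the correct prefix in ${\mathcal O}(L)$ additional work. (A binary search over $m$ would also work, but is unnecessary since the sort already dominates.) Adding the three contributions yields ${\mathcal O}(L\log_2 L)$ overall, as claimed.

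The main obstacle is not the arithmetic but justifying the prefix/water-filling structure rigorously: one must verify that the active set is monotone in $\log_2 d_k$ — so that no user with a larger $\log_2 d_k$ can be inactive while a user with a smaller one is active — and that the interval test selects a unique, consistent $m$. Both facts follow from the monotonicity of $\theta \mapsto \sum_k \left[\log_2 d_k - \theta\right]^+$ established in the first paragraph, but this needs to be stated carefully to rule out degenerate cases (ties among the $\log_2 d_k$, or a budget $B$ large enough that all $L$ users are active, in which case $m=L$ and the upper test is vacuous). Once that monotonicity is in hand, the complexity bound is immediate from the cost of the sort.
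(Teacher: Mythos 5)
Your argument is correct and follows essentially the same route as the paper: sort the user-dependent thresholds in ${\mathcal O}(L\log_2 L)$, recognize that the active set is a prefix of the sorted order determined by the water level, and recover $\eta^*$ in closed form once the breakpoint is located. The only difference is an implementation detail --- you find the breakpoint by a linear sweep using precomputed prefix sums (${\mathcal O}(1)$ per candidate), whereas the paper uses binary search over the sorted thresholds with an ${\mathcal O}(L)$ feasibility test per probe; both phases are dominated by the sort, so the ${\mathcal O}(L\log_2 L)$ bound is the same.
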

\begin{proof} See Appendix~\ref{proofs_MIMO}.\end{proof}
The following lemma states that weighted sum-rate function in (\ref{DP_PHY}) is non-decreasing and sub-modular on set $E=\{e_{kb}:i=1,\ldots,L\mbox{ and }b=1,\ldots,B\}$, thereby allowing us to connect and compare the results in this section with those in the previous section on sub-modular functions. The proof is standard in the literature on sub-modular functions and follows from the fact that the fractional relaxation of the weighted sum-rate function is concave in $\{b_k\}$ over the domain $\{[0,B]^K:\sum_{k}b_k\leq B\}$. It is hence omitted.
\begin{lemma}
The weighted sum-rate function in (\ref{DP_PHY}) where $b_k=|A|_k,\bsp A\subseteq E$, $E=\{e_{kb}:i=1,\ldots,L\mbox{ and }b=1,\ldots,B\}$ is non-decreasing and sub-modular on this set $E$.
\end{lemma}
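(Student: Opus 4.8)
The plan is to reduce the claim to the per-user structure of the rate function and then invoke the additivity property already established in Lemma~\ref{additivity}. Writing the per-user rate as $r_{k}[\alpha_k,b_k]=\beta_2(\overline{\text{SNR}}_k)-\left(\beta_2(\overline{\text{SNR}}_k)-\beta_1(\overline{\text{SNR}}_k)\right)2^{-b_k}$, I would first observe that since $\|{\bf h}\|^2=|h_1|^2+|h_2|^2\geq |h_1|^2$ pointwise and $\log_2(1+\cdot)$ is increasing, the two-tap expected rate dominates the one-tap expected rate, i.e. $\beta_2(\overline{\text{SNR}}_k)\geq\beta_1(\overline{\text{SNR}}_k)\geq 0$. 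Hence the coefficient $c_k\eqdef q_k\left(\beta_2(\overline{\text{SNR}}_k)-\beta_1(\overline{\text{SNR}}_k)\right)$ multiplying $2^{-b_k}$ is non-negative for every user $k$.

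With this in hand, monotonicity is immediate: as $b_k$ increases, $2^{-b_k}$ decreases, so each $r_{k}[\alpha_k,b_k]$ is non-decreasing in $b_k=|A|_k$, and summing with non-negative weights $q_k$ preserves monotonicity on $E$. The substantive part is sub-modularity. Here I would exploit the cardinality-based structure: on the bipartite set $E$, the value of the objective depends on a subset $A$ only through the per-user bit counts $|A|_k$. For such functions it suffices to check that each scalar map $b\mapsto r_{k}[\alpha_k,b]$ exhibits diminishing marginal returns, i.e. that $r_{k}[\alpha_k,b+1]-r_{k}[\alpha_k,b]$ is non-increasing in $b$. A direct computation gives the marginal gain $\tfrac{1}{2}\left(\beta_2(\overline{\text{SNR}}_k)-\beta_1(\overline{\text{SNR}}_k)\right)2^{-b}$, which is manifestly non-increasing in $b$ because the bracketed coefficient is non-negative and $2^{-b}$ is decreasing. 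Equivalently, as the surrounding text notes, the continuous extension $b\mapsto\beta_2-(\beta_2-\beta_1)2^{-b}$ has second derivative $-(\beta_2-\beta_1)(\log 2)^2 2^{-b}\leq 0$ and is therefore concave, and a concave scalar function of a cardinality count yields a sub-modular set function.

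Given these two observations, the conclusion follows by assembling the pieces: each per-user term $r_{k}[\alpha_k,|A|_k]$, normalized by subtracting $r_{k}[\alpha_k,0]$, is a non-decreasing, sub-modular function on $E$, and by Lemma~\ref{additivity} any non-negatively weighted sum of sub-modular functions is again sub-modular. Thus the weighted sum-rate function in (\ref{DP_PHY}) is non-decreasing and sub-modular on $E$, placing it squarely within the scope of Lemma~\ref{SM_our_function}. I expect the only delicate point to be making rigorous the standard fact that a concave scalar function of a per-type cardinality induces a sub-modular set function on the bipartite ground set; once the single-user diminishing-returns inequality above is verified, this reduces to checking $F(A\cup\{e\})-F(A)\geq F(B\cup\{e\})-F(B)$ for $A\subseteq B$ and $e\in E\setminus B$, which depends only on the bit count of $B$ at the user associated with $e$ and hence follows directly from that single-user property.
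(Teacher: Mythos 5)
Your proof is correct and follows essentially the same route the paper indicates: the paper omits the argument, stating only that it is standard and follows from concavity of the fractional relaxation $b\mapsto\beta_2-(\beta_2-\beta_1)2^{-b}$, which is precisely the per-user diminishing-returns computation you carry out explicitly (marginal gain $\tfrac{1}{2}(\beta_2(\overline{\text{SNR}}_k)-\beta_1(\overline{\text{SNR}}_k))2^{-b}$, non-increasing in $b$) before assembling via Lemma~\ref{additivity}. Your version is a faithful and complete filling-in of the omitted details, including the correct observation that $\beta_2\geq\beta_1\geq 0$ and the reduction of set-function sub-modularity on $E$ to the scalar diminishing-returns inequality for the user associated with the added element.
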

Comparing the results in Theorems~\ref{greedycomp} and~\ref{thm_DPcomplexity}, we see that by assuming less about the exact form of the communication system, we are incurring an added complexity cost of ${\mathcal O}(B \mbox{log}_2 L)$, while providing a system-independent approximation guarantee of $(1-\frac{1}{e})$.

Once we solve for $b_k^*$, we apply a floor operation in order to enforce the integer constraints, i.e., we set $b_{k,INT}^*=\left\lfloor b_k^* \right\rfloor\mbox{ if }b_k^*\geq 1$ and $b_{k,INT}^*=0\mbox{ if }b_k^*< 1$. This leads us to the task of quantifying loss due to integrality, which we do next. We consider two cases: For $b_k^* \geq 1$, we have that
\be
\begin{array}{rl}
&\frac{\beta_2(\overline{\text{SNR}}_k)(1-2^{-b_{k,INT}})+\beta_1(\overline{\text{SNR}}_k)2^{-b_{k,INT}}}{\beta_2(\overline{\text{SNR}}_k)(1-2^{-b_{k}^*})+\beta_1(\overline{\text{SNR}}_k)2^{-b_{k}^*}}\\
\geq & \frac{\beta_2(\overline{\text{SNR}}_k)(1-2^{-b_{k}^*+1})+\beta_1(\overline{\text{SNR}}_k)2^{-b_{k}^*-1}}{\beta_2(\overline{\text{SNR}}_k)(1-2^{-b_{k}^*})+\beta_1(\overline{\text{SNR}}_k)}\\ &\mbox{ since $b_{k}^*-1\leq b_{k,INT}^*\leq
b_{k}^*+1$}\\
\geq &  \frac{1}{2}\frac{\beta_1(\overline{\text{SNR}}_k)2^{-b_{k}^*}}{\beta_1(\overline{\text{SNR}}_k)2^{-b_{k}^*}}\mbox{ since $1 \leq b_k^*< \infty$}\\\label{bkgreaterone}
=&\frac{1}{2}.
\ea
\ee
Similarly for $b_k^* < 1$ and $b_{k,INT}=0$, we have that
\be
\begin{array}{rl}
&\frac{\beta_2(\overline{\text{SNR}}_k)(1-2^{-b_{k,INT}})+\beta_1(\overline{\text{SNR}}_k)2^{-b_{k,INT}}}{\beta_2(\overline{\text{SNR}}_k)(1-2^{-b_{k}^*})+\beta_1(\overline{\text{SNR}}_k)2^{-b_{k}^*}}\\
\geq & \frac{\beta_1(\overline{\text{SNR}}_k)}{\frac{1}{2}\beta_2(\overline{\text{SNR}}_k)+\beta_1(\overline{\text{SNR}}_k)}\mbox{ since $b_k^*<1$}\\
=&\frac{1}{\frac{1}{2}\frac{\beta_2(\overline{\text{SNR}}_k)}{\beta_1(\overline{\text{SNR}}_k)}+1}.\label{bklesserone}
\ea
\ee
From (\ref{bkgreaterone}) and (\ref{bklesserone}), we can compute the approximation factor as
\be
\min\left\{\frac{1}{2},\frac{1}{\frac{1}{2}\max_k \left\{\frac{\beta_2(\overline{\text{SNR}}_k)}{\beta_1(\overline{\text{SNR}}_k)}\right\}+1}\right\}.\label{approxfactorpresim}
\ee
Thus, the approximation factor critically depends on the ratio $\frac{\beta_{2}\left(\overline{\text{SNR}}_k\right)}{\beta_{1}\left(\overline{\text{SNR}}_k\right)}$, which essentially represents the rate gain due an extra tap or antenna.
\begin{figure}[hbt]
\centering
\includegraphics[scale=0.55]{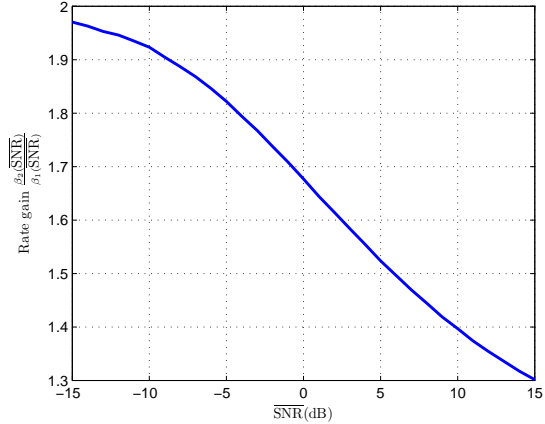}
\caption{Rate gain due to the addition of an extra antenna as a function of $\overline{\text{SNR}}$.}
\label{fig:approxfactor}
\end{figure}
In Fig.~\ref{fig:approxfactor}, we numerically compute $\frac{\beta_{2}\left(\overline{\text{SNR}}_k\right)}{\beta_{1}\left(\overline{\text{SNR}}_k\right)}$ for a typical cellular range of $-15$dB to $15$dB and see that $\frac{\beta_{2}\left(\overline{\text{SNR}}_k\right)}{\beta_{1}\left(\overline{\text{SNR}}_k\right)}\leq 2$ over this range. Combining the results in Fig.~\ref{fig:approxfactor} with (\ref{approxfactorpresim}), we can conclude that the proposed relaxation/rounding algorithm has an approximation factor of $\frac{1}{2}$.

\revbegin In summary, the two proposed algorithms exploit the structure of the feedback allocation problem in settings such MISO with quantized beamforming, to deliver lower complexity than the optimal dynamic programming approach accompanied by guarantees on accuracy. Note that the accuracy guarantees, namely, $(1-\frac{1}{e})$ for the greedy algorithm and $\frac{1}{2}$ for the convex program are independent of any system parameters such as channel statistics, total bit budget $B$, etc., and are hence, a clear measure of worst-case performance. We now move on to numerical simulations in the next section, which helps us understand the actual performance against the backdrop of these worst-case guarantees.\revend

\section{Numerical Simulations}\label{sec:simulations}
In this section, we evaluate the performance of the greedy feedback allocation algorithm in a MISO downlink network. The simulations serve as a proof of concept for the proposed dynamic feedback allocation approach. As the baseline case, we introduce a static equal allocation algorithm that we describe in detail below along with the rest of the simulation setup. Note that we now revert back (from the virtual system with $L$ users) to the original system with $K$ users, i.e., the indices $k=1,2,\ldots,K$, now track actual users.

\textbf{Number of users, OFDMA bands and data scheduling policy}: There are $K=4$ users in a $10$MHz system with a total of $L=8$ OFDMA sub-bands. Since the focus of these simulations (and this paper) is primarily on quantifying the gains of dynamic feedback allocation, the users are assigned equal amounts of spectrum for data transmission at the beginning of the communication epoch \emph{that do not change with time}, i.e., user $i$ is always assigned to bands $\{2i-1,2i\}$.

\textbf{Small-scale fading, average user SNRs and traffic model}: The users are stationary and have fixed average SNRs through the entire epoch of communication. We consider two average SNR profiles - \emph{(i)} Large asymmetry with average SNRs $10\log_{10}(\overline{\text{SNR}}_1[\bar{t}]) = -10$dB, $10\log_{10}(\overline{\text{SNR}}_2[\bar{t}]) = -8$dB, $10\log_{10}(\overline{\text{SNR}}_3[\bar{t}]) = 10$dB, $10\log_{10}(\overline{\text{SNR}}_4[\bar{t}]) = 10$dB and \emph{(ii)} Nearly symmetric with average SNRs $10\log_{10}(\overline{\text{SNR}}_1[\bar{t}]) = -1$dB, $10\log_{10}(\overline{\text{SNR}}_2[\bar{t}]) = -1$dB, $10\log_{10}(\overline{\text{SNR}}_3[\bar{t}]) = 1$dB, $10\log_{10}(\overline{\text{SNR}}_4[\bar{t}]) = 1$dB. Asymmetric profiles are of interest because this is the regime where dynamic allocation would arguably have most value. The small-scale fading channel realizations $\bf h$ in each slot are generated according to the standard complex Gaussian distribution. The arrivals are assumed to be deterministic and symmetric with rates $\lambda_k = \lambda,\bsp \forall k$.

\textbf{Feedback budget and baseline equal allocation}: The feedback budget is set to $B = 12$ bits. The baseline algorithm allocates an equal number of bits to each user, i.e., $b_k = 3,\bsp \forall k$. Each user in turn distributes these three bits as follows - two bits to the first sub-band it is assigned and one bit to the second. In other words, the per sub-band allocation for user $k$ is $b_{k1} = 2$ and $b_{k2} = 1$. The allocation is changed every $T=10$ slots.

\textbf{MISO RVQ codebooks and post-quantization rate}: For each bit allocation $b$, we generate codebook ${\mathcal C}(b)$ by choosing two points uniformly at random from the sphere ${\mathbb C}^2$. For such a codebook ${\mathcal C}(b)$, we compute the ergodic rate over $1000$ standard (zero mean, unit variance), complex Gaussian channel realizations. We repeat this experiment over $100$ codebooks and choose the codebook ${\mathcal C}^*(b)$ that provides maximum ergodic rate. We repeat this procedure for each $b\in\{0,1,\ldots,B\}$ and create a \emph{super-codebook} $\{{\mathcal C}^*(0),\ldots,{\mathcal C}^*(B)\}$. Note that the codebook generation procedure is done once at the beginning of the communication epoch. In the previous section, we proposed
\be
r_{k}[\alpha,b]  = \beta_2(\overline{\text{SNR}})(1-2^{-b})+ \beta_1(\overline{\text{SNR}})2^{-b}\label{ergodicapprox}
\ee
as an approximation for the ergodic rate given $b$ bits. In Fig.~\ref{fig:ergodicrateapprox}, we compare (\ref{ergodicapprox}) with the true (numerically computed) ergodic rate given $b$ bits at various $\overline{\text{SNR}}$ values in a typical operational range. We see that (\ref{ergodicapprox}) is indeed an accurate approximation.
\begin{figure}[hbtp]
\centering
\subfigure[]{
\includegraphics[scale=0.5]{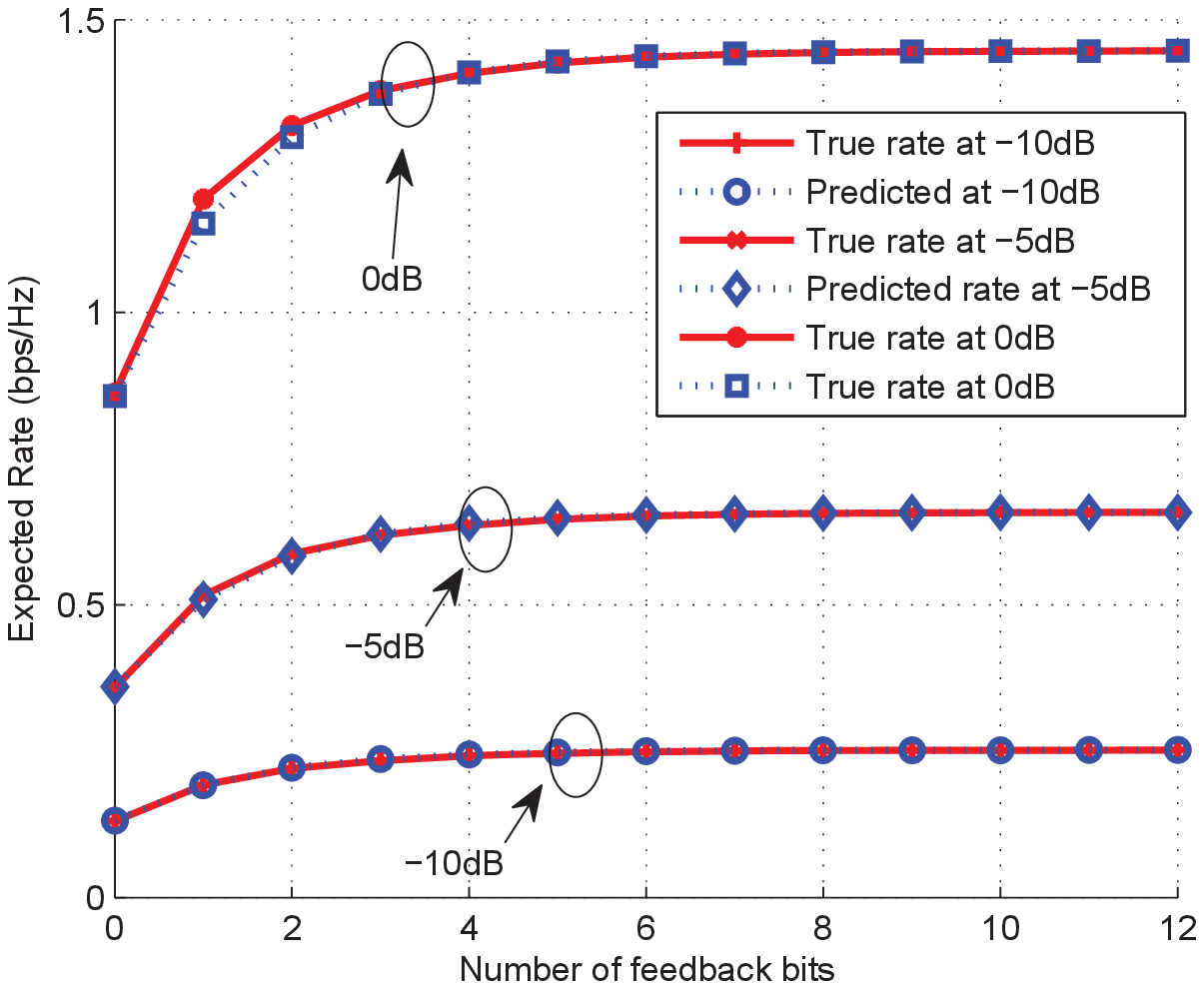}
\label{fig:distortionlowSNR}
}
\hspace{0.1cm}
\subfigure[]{
\includegraphics[scale=0.5]{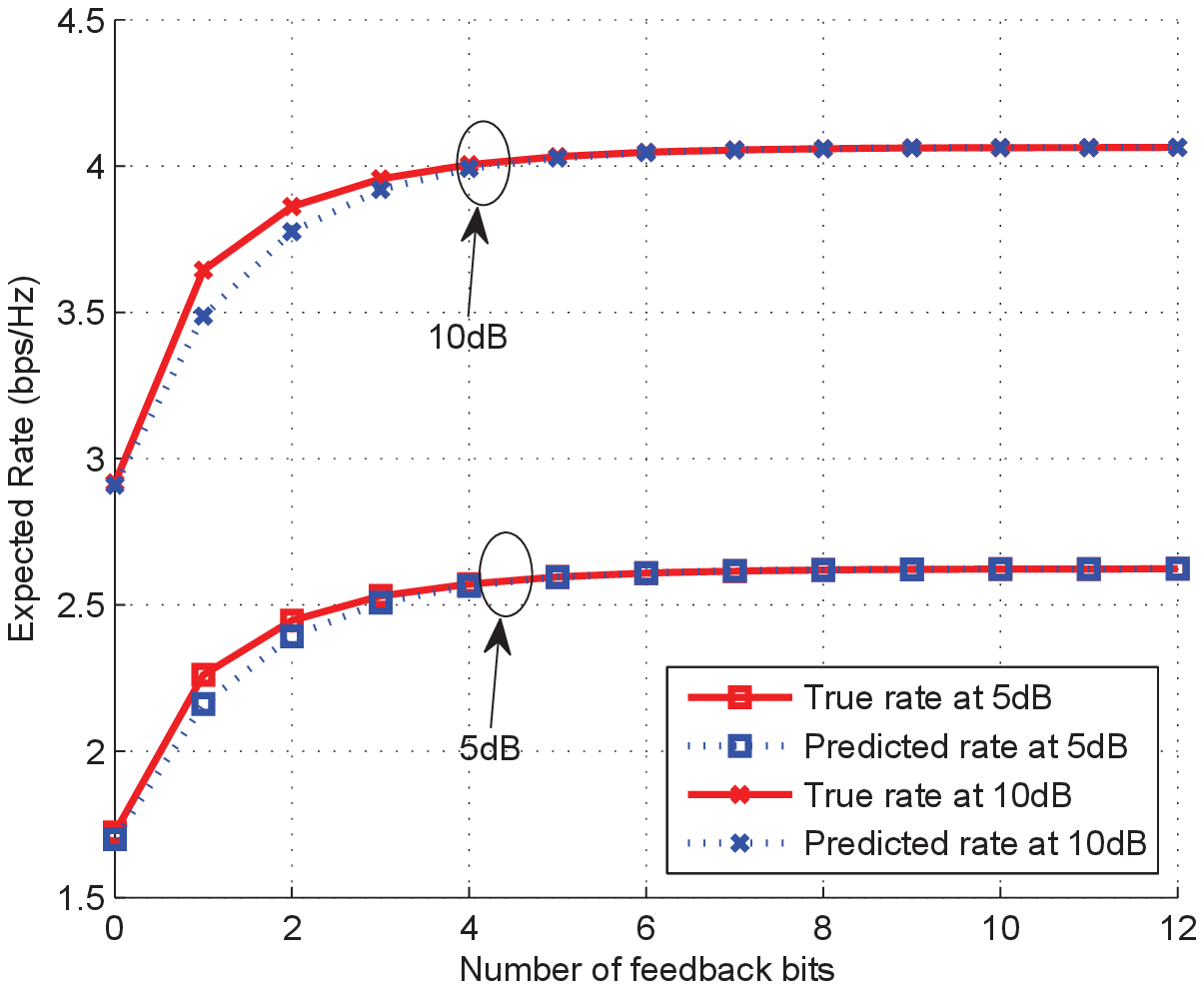}
\label{fig:distortionhighSNR}
}
\vspace{-0.2cm}
\caption{Comparison of predicted rate (\ref{ergodicapprox}) with true numerically computed ergodic rate given for codebooks $\{{\mathcal C}^*(b)\}_{b=0}^B$ at various values of $\overline{\text{SNR}}$; (a) Low-to-moderate $\overline{\text{SNR}}$ (b) Moderate-to-high $\overline{\text{SNR}}$.}
\label{fig:ergodicrateapprox}
\vspace{-0.3cm}
\end{figure}

Having described the simulation setup in detail, we now present the results of our experiments. \revbegin We compare the performance of three algorithms -- the greedy dynamic feedback allocation algorithm in Algorithm~\ref{alg:2}, the equal allocation case, and the case with perfect feedback (i.e., where the bit budget $B = \infty$) -- under the two average SNR profiles. The results for SNR profile with large asymmetry are plotted in Fig.~\ref{fig:throughput}.
\begin{figure}[hbtp]
\centering
\subfigure[]{
\includegraphics[scale=0.5]{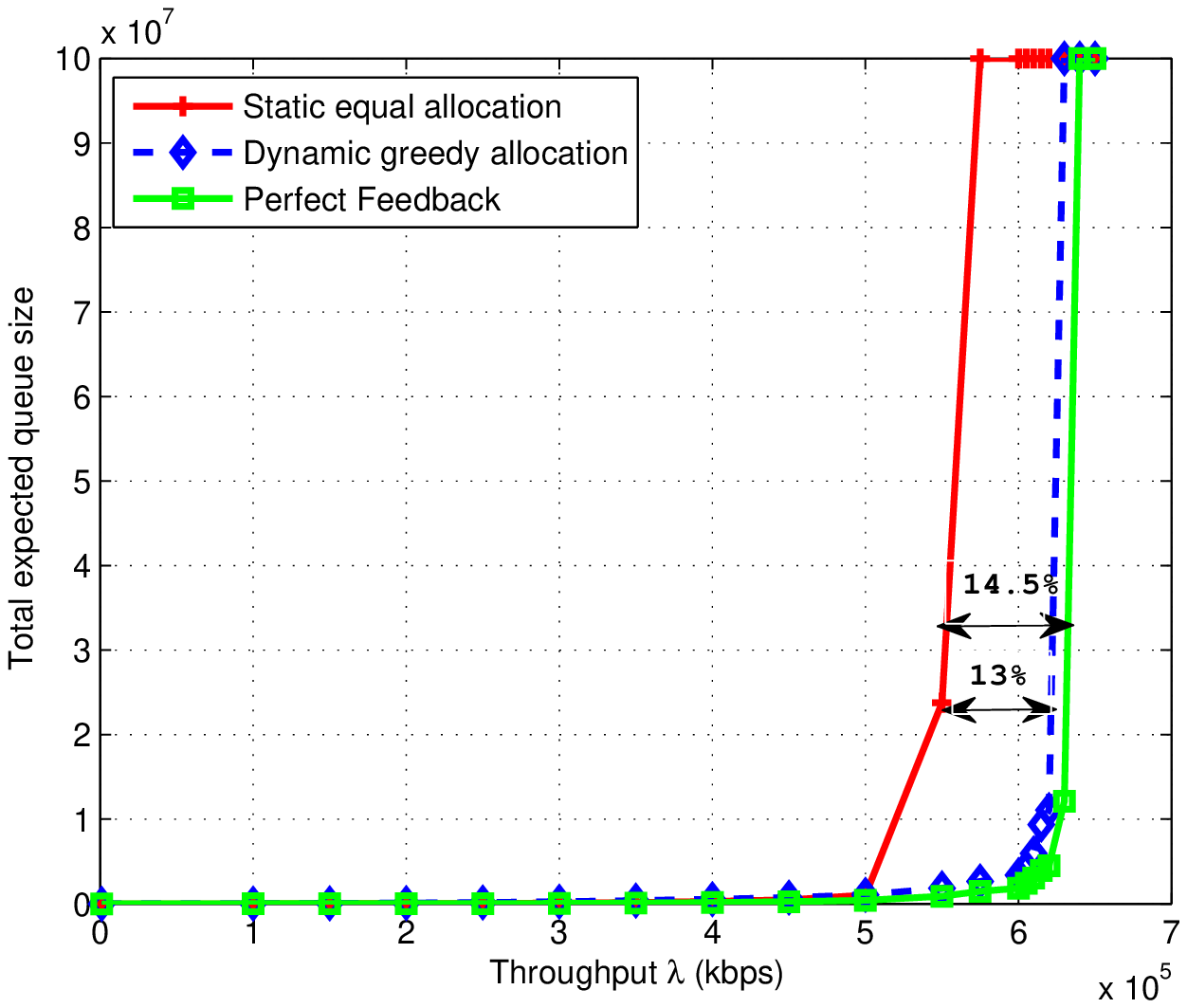}
\label{fig:HighasymmetricResults}
}
\subfigure[]{
\includegraphics[scale=0.5]{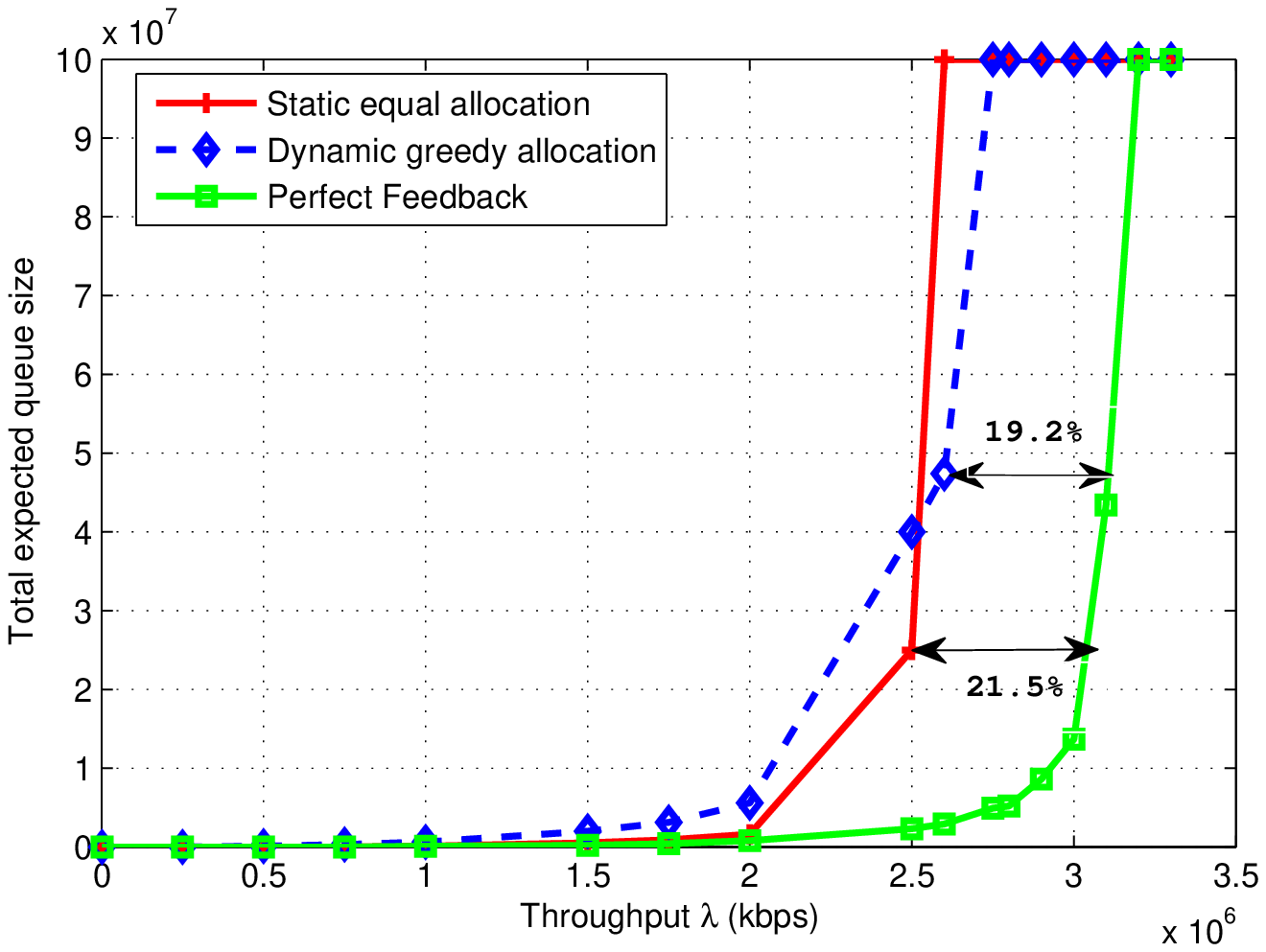}
\label{fig:LowasymmetricResults}
}
\vspace{-0.2cm}
\caption{Throughput under the two feedback schemes with different average SNR profiles. The average queue length is measured over $10000$ iterations; (a) Large asymmetry with $\overline{\text{SNR}}$ profile $\{-10,-8,10,10\}$dB (b) Nearly symmetric case with $\overline{\text{SNR}}$ profile $\{-1,-1,1,1\}$dB}
\label{fig:throughput}
\vspace{-0.3cm}
\end{figure}
In Fig.~\ref{fig:HighasymmetricResults}, we see that the greedy dynamic allocation outperforms the static equal allocation approach by almost $13\%$ while consuming only an additional $\frac{\log_2 \left({B+L-1 \choose L-1}\right) }{T} = 0.88$ bits per second of overhead. Furthermore, greedy dynamic algorithm achieves within $1.5\%$ of the optimal allocation\footnote{The optimal weighted sum-rate is at most as large as the weighted sum-rate with perfect feedback.} through dynamic programming thereby rendering the performance guarantee of $(1-\frac{1}{e})$ in Theorem~\ref{greedycomp} quite conservative.

In the nearly symmetric case however, the gains due to dynamic allocation decrease and almost vanish in the particular experiment that we consider in Fig.\ref{fig:LowasymmetricResults}, as would be expected. We see that in this case, the greedy algorithm achieves within $20\%$ of the optimal.\revend

Thus, with minimal expenditure in overhead, the dynamic allocation approach achieves notable gains in throughput for asymmetric settings, thereby showing considerable promise for systems with larger feedback budgets and a greater degree of asymmetry (in traffic loads and channels).

\section{Concluding remarks}\label{sec:conclusion}

We summarize the algorithmic contributions presented in Sections~\ref{sec:optsol} and~\ref{sec:redcomp} in Table~\ref{table1}. We observe from the table that these algorithms explore the tradeoffs between accuracy, computational efficiency and the structure of the weighted sum-rate function. An interesting question and future direction pertaining to the section on MISO systems is whether such an analysis can be extended to cover other commonly-deployed multiple antenna architectures. Finally, the design of joint data scheduling and feedback allocation policies is another direction for future research.
\begin{table}[h!b!p!]
\setlength{\tabcolsep}{2pt}
\caption{Properties of proposed online feedback allocation algorithms}
\begin{center}
\begin{footnotesize}
\begin{tabular}{ |c |c | c | c |}
\hline			
  Algorithm & Required structure on & Complexity & Approx.  \\
  & weighted sum-rate & & factor \\
  \hline
  Dynamic  & None & ${\mathcal O}(LB^2)$ & 1 \\
  Programming &&&\\
\hline
  Greedy & Non-decreasing  & ${\mathcal O}((B+L)\mbox{log}_2 K)$ & $\left(1-\frac{1}{e}\right)$ \\
   & Sub-modular  & & \\
  \hline
  Convex & Non-decreasing & ${\mathcal O}(L\mbox{log}_2 L)$ & $\frac{1}{2}$ \\
  Relaxation  & Sub-modular   &&\\
    & MISO RVQ Systems   &&\\
  \hline
\end{tabular}
\end{footnotesize}
\end{center}
\label{table1}
\end{table}

In summary, we propose optimal feedback allocation policies for cellular downlink systems where the base station has a limited feedback budget. This problem is solved using dynamic programming incurring pseudo-polynomial complexity in the number of users and the total bit budget. When the weighted sum-rate is a non-decreasing sub-modular function, we leverage the theory of sub-modular function maximization to propose a greedy algorithm with polynomial complexity that has a approximation guarantee of $\left(1-\frac{1}{e}\right)$. For MISO transmit beamforming physical layer communication schemes with quantized beamformer feedback, we recognize that the weighted sum-rate function is non-decreasing and sub-modular for RVQ codebooks. More importantly, it takes a special form that allows us to develop an approximation algorithm based on convex relaxations that can be solved in closed-form, incurring lesser complexity than the greedy algorithm. We connect the performance of the proposed approximate online algorithms to the long-term throughput region of the system.

\appendices
\section{Proof of Theorems~\ref{thm_DPrelaxation}-\ref{thm_DPcomplexity}}\label{proofs_MIMO}
\noi {\em Proof of Theorem~\ref{thm_DPrelaxation}}: The objective function is clearly convex since $2^{-b_k}$ is convex and since linear sums preserve convexity. By studying (\ref{waterfilling_objective}) closely, we can also say that $b_k^*$ is such that $\sum_{k=1}^Lb_k^*=B$ since if this not true, we can increase the bit allocation for at least one user thereby decreasing the objective function. Since $B>0$, $b_k=0,\forall k$ is in the interior of our constraint set ${\mathcal B}$, which implies that Slater's constraint for strong duality is satisfied and that the Karush-Kuhn-Tucker (KKT) conditions are sufficient in nature. The Lagrangian cost function can be written as ${\mathcal L}(b_k,\lambda_k,\eta)=-\sum_{k=1}^L q_k r_1(\text{SNR}_k)2^{-b_k}-\lambda_k b_k+\eta\left(\sum_{k}b_k-B\right)$ for which the KKT conditions are $b_k^*\geq 0$, $\lambda_k^*\geq 0$, $b_k^* \lambda_k=0$, and $\eta^*=q_k r_1(\text{SNR}_k)(\log\bsp 2)2^{-b_k} +\lambda_k^*$. Since $2^{-b}$ is a decreasing function in $b$, it follows that if $\eta^*\leq q_k r_1(\text{SNR}_k)(\log\bsp 2)$, then $\lambda_k^*=0$ and $b_k^*= -\log_2\left(\frac{\eta^*}{q_k (\log\bsp 2)}\frac{1}{r_1(\text{SNR}_k)}\right)$ is a valid solution to (\ref{waterfilling_objective}). If $\eta^*> q_k r_1(\text{SNR}_k)(\log\bsp 2)$, $\lambda_k^*=\eta^*-q_k r_1(\text{SNR}_k)(\log\bsp 2)$ and $b_k^*=0$. Hence, we can write the solution as $b_k^*=\left[-\log_2\left(\frac{\eta^*}{q_k (\log\bsp 2)}\frac{1}{r_1(\text{SNR}_k)}\right)\right]^+$ where $\eta^*$ is chosen such that $\sum_{k}b_k^*=B$. \\

\noi {\em Proof of Theorem~\ref{thm_DPcomplexity}}: In order to compute the solution in Theorem~\ref{thm_DPrelaxation}, we first need to sort $\left\{\theta_k\right\}_{k=1}^L$ in ascending order where $\theta_k = q_k r_1(\text{SNR}_k)(\log\bsp 2)$. This has complexity ${\mathcal O}(L\mbox{log}_2 L)$. Call this sorted set $\left\{\theta_m\right\}$. Once sorted, we need to set $\eta^*=\theta_m$ for each $m$ and test feasibility. Testing feasibility incurs ${\mathcal O}(L)$, as it is a $L$-term addition and scanning through each $\theta_m$ incurs ${\mathcal O}(\mbox{log}_2 L)$ through the use of binary search. As we increase $\eta^*$, more $b_m^*$ terms are set to zero. Once we locate $m_1$ and $m_2$ such that $\eta^{*}=\theta_{m_1}$ is infeasible while $\eta^{*}=\theta_{m_2}$ is feasible, we can compute $\eta^*$ in closed-form since it satisfies $\sum_{m\geq m_2}b_m^*=B$. Hence, the total complexity is ${\mathcal O}(L\mbox{log}_2 L)+{\mathcal O}(L\mbox{log}_2 L)={\mathcal O}(L\mbox{log}_2 L)$.\\

\end{document}